\newcolumntype{K}[1]{>{\centering\arraybackslash}p{#1}} %added
\newtheorem{theorem}{Theorem}[section]
\newtheorem{lemma}[theorem]{Lemma}
\newtheorem{corollary}{Corollary}[theorem]
\theoremstyle{definition}
\newtheorem{definition}{Definition}[section]
\newtheorem{example}{Example}[section]
\begin{document}
\title{Bayesian Networks, Total Variation and Robustness}
\author{Sophia K. Wright \& Jim Q. Smith}

\date{\small{Department of Statistics, University of Warwick, Coventry, CV4 7AL, UK}}

\maketitle

\begin{abstract}
\noindent Now that Bayesian Networks (BNs) have become widely used, an appreciation is developing of just how critical an awareness of the sensitivity and robustness of certain target variables are to changes in the model.  When time resources are limited, such issues impact directly on the chosen level of complexity of the BN as well as the quantity of missing probabilities we are able to elicit.  Currently most such analyses are performed once the whole BN has been elicited and are based on Kullback-Leibler information measures.  In this paper we argue that robustness methods based instead on the familiar total variation distance provide simple and more useful bounds on robustness to misspecification which are both formally justifiable and transparent.  We demonstrate how such formal robustness considerations can be embedded within the process of building a BN.  Here we focus on two particular choices a modeller needs to make: the choice of the parents of each node and the number of levels to choose for each variable within the system.  Our analyses are illustrated throughout using two BNs drawn from the recent literature.  \\ \\
KEYWORDS: Bayesian Networks, Robustness, Total Variation Distance, Elicitation, Decision Support Systems.
\end{abstract}

\noindent The first author would like to acknowledge their EPSRC grant EP/L016710/1, awarded through the OxWaSP CDT programme. The second author contributed to this work within their EPSRC grant EP/N510129/1.

\section{Introduction}
Bayesian Networks (BNs) are now a widely used probabilistic modelling tool, particularly in the field of decision support.  It is now acknowledged as best practice \citep{cowell_dawid_lauritzen_spiegelhalter, laskey_mahoney_2000, jim_book} that these models are set up in two distinct stages.  Firstly the structure of the BN, as expressed by its Directed Acyclic Graph (DAG), is either directly elicited from domain experts or when sufficient supporting data exists, learned from the data using a model search algorithm with default priors on the hyperparameters, see \cite{boneh_2010} and \cite{korb-nicholson2010}.  Once this graphical framework has been discovered, the graph is embellished into a full probabilistic model.  In the case of a discrete BN, this second stage involves eliciting or estimating, using priors on probabilities informed by expert judgements, the entries of its conditional probability tables (CPTs).  These CPTs provide the numerical prespecification of all the conditional probabilities needed to generate the full joint probability mass function and hence a fully specified probability model.

When engaging in this two stage process the analyst needs to be fully aware of precisely which inputs might be critical to the inferences made through the BN, see \cite{Albrecht2014}.  One critical element in an elicitation, or statistical estimation of the graph is to ensure these critical features are specified as accurately as possible.  This is especially important when elicitation or estimation is resource limited, as is usually the case in practice.  The modeller can then optimise their allocation of resources to concentrate on eliciting those elements of the model whose misspecification might most influence the required outputs.

To this end, the practitioner, prompted by the functionality of various software, is currently encouraged to develop awareness of the robustness of a chosen model to its inputs by performing a one-at-a-time numerical sensitivity analysis of the preliminary BN.  Here various different forms of numerical contaminations of the model are investigated, where effects are usually measured in terms of mutual information/Kullback-Leibler divergence \citep{albrecht_nicholson_whittle, friedman, Nicholson_mutual_information, Zaragoza}.  This type of study is obviously extremely useful.  On the other hand it has drawbacks.  First, it relies on the chosen enacted perturbations covering the entire space which becomes more challenging as models become increasingly large.  Furthermore, even if such a search is performed systematically, the impacts (most currently measured by mutual information), are not directly relevant to the impact on ensuing decisions, see below for further clarification.  Additionally, such an analysis must perforce be performed after the model has been fully specified.  This means that the whole probability model is needed before the sensitivity analysis can be performed.  One interesting recent attempt to provide such assessments after the structural elicitation phase, but before the probabilistic embellishment is through the use of distance weighted sensitivity measures \citep[see][]{albrecht_nicholson_whittle}.  However, these do not dovetail with the mutual information measures described above and have a level of arbitrariness in the choice of weight function needed to use this method. 

Over recent years more formal and systematic \emph{robustness} analyses have appeared.  Robustness of probability models has been studied by statisticians for many decades, and specific methodology for Bayesian Networks has also been recently developed:  \cite{coupe_gaag_2002}, \cite{BN_KL}, \cite{laskey-1995}, \cite{o'neill-2009}, \cite{renooij-2010}.  These fall into two main streams: local robustness studies and global studies.  In the former, a chosen probability model is perturbed using a finite parametrised modification.  The latter, termed global analyses, does not rely on perturbation lying within a given parametric family \citep{o'neill-2009, smith-daneshkhah2010}.  Instead, an appropriate divergence measure is applied to first specify a neighbourhood system around each model.  Bounds are then calculated for the maximum deviation in the inference that could be achieved by a model in this neighbourhood.  If this deviation is small then the model is deemed to be robust \citep{Gustafson&Wasserman,smith-rigat2012}.  Both types of robustness analysis have been applied to BNs in work such as \cite{smith-daneshkhah2010}.  In this paper we focus solely on global robustness studies as applied to finite discrete BNs.

Thus far, global robustness studies for BNs have mainly centred around the analysis of how robust a model might be to perturbations, with respect to Kullback-Leibler  (KL) or Chan-Darwiche divergences \citep[see][]{chan_darwiche_2005, BN_KL, leonelli_gorgen_smith}.  Both of these divergence measures benefit from some helpful technical properties which allow various measures of dependence to have explicit formulae.  These measures are specified in terms of log probabilities in the KL case or equivalently ratios of probabilities in the Chan-Darwiche instance.  Therefore, both have the disadvantage that they depend very heavily on the accurate specification of \emph{small probabilities}.  However, it is well documented that it is precisely these small probabilities that typically exhibit the largest elicitation error \citep[see][]{elicitation_book, jim_book}.  Furthermore, when BNs are learned from data, any associated small probabilities are difficult to reliably estimate from data, because almost by definition we will see very few of these events in any training set we use to estimate a model. 

In many circumstances (especially in decision analysis), the misspecification of improbable event probabilities has only a small impact on the required outputs of a decision analysis: see below.  For the purposes of the two stage process described above, the Kullback-Leibler and Chan-Darwiche divergence measures are hardly ideal as they can be highly sensitive to misspecification which may have little effect on any supported decision analysis. 

In this paper we demonstrate that an alternative robustness study based on a more conventional divergence measure (widely used in probability theory and stochastic analysis), which is the total variation distance, has some serious practical and theoretical advantages over its main competitors.  Although it is often difficult to derive \emph{explicit} formulae for the impacts of deviation in variation, it is nevertheless straightforward to \emph{tightly bound} such deviations in variation distance.  Deviation in variation corresponds much more closely to the types of error we would envisage experiencing within either an elicitation exercise or through misestimation.  Perhaps most important, the expectation of a fixed bounded utility function $U$, under various decisions (induced by an approximation) are simply bounded by linear functions of the total variation in the probability distributions of the attributes of $U$ \citep[see e.g.][]{jim_book}.  Note that in a BN these attributes will typically constitute a small subset of the totality of its variables.  Hence small variation distances (between probability mass functions)  on these small subsets translate \textit{directly} into small effects in the pertinent expected utilities.  Conversely, large deviations translate into large effects that might have a greater impact on some specification of a utility.  

In the following section we review the BN framework and introduce our examples.  Then in Section \ref{TVdist_section} we review some simple properties of the total variation distance and show that the effect in variation distance of the misspecification of the probability mass function of one random variable in a BN to another diminishes exponentially.  We then discover explicit bounds for this error both when the BN is decomposable and more generally.  We demonstrate that this impact can be bounded explicitly in terms of a simple function of the extreme entries of the CPTs within the BN.  These results have the useful spin-off that CPTs do not necessarily need to be fully elicited before the robustness analysis can take place.  In Section \ref{approximations_section} we show how these explicit measures of robustness can be applied to determine the effect of approximating simplifications on the topology of the BN and additionally, to decide the number of levels into which to categorise each variable.  We demonstrate how by using total variation, robustness analyses can be performed in a harmonious composite way that directly bounds the impact on decision making of various types of expedient approximations.  Finally in Section \ref{strategies_section} we provide some guidelines to best employ our results in practice and discuss some enhancements of our strategy.

\section{Hypotheses of a Bayesian Network}
\label{BN section}
We begin by giving a short review of BNs and some of its properties we use later in the paper.  A discrete Bayesian Network (BN) $\mathcal{G}$, or DAG,  on a random vector $\boldsymbol{X}\triangleq \left( X_1,X_2,\ldots,X_m\right) $ represents a family of models which respect a set of conditional independence hypotheses so that for $i=2,3,\ldots ,m$ 
\begin{align*}
X_i \amalg \boldsymbol{X}_{R(i)}|\boldsymbol{X}_{P_a(i)} 
\end{align*}
where $Pa(i)\subseteq \left\{ 1,2,\ldots ,i-1\right\} $ are the \emph{parents} of $X_i$, i.e. those indices of the previously listed variables on which $X_i$ depends, and $R(i)\triangleq \left\{ 1,2,\ldots ,i-1\right\} \backslash Pa(i)$. 

An equivalent expression is that the joint probability mass function $p(\boldsymbol{x})$ of $\boldsymbol{X}$ factorises as 
\begin{equation}
p(\boldsymbol{x})=p(x_1)p(x_2|\boldsymbol{x}_{P_a(2)})\ldots p(x_{i}|
\boldsymbol{x}_{P_{a}(i)})\ldots p(x_{m}|\boldsymbol{x}_{P_a(m)}).
\label{BN fact}
\end{equation}
In either formulation the directed graph of the BN has vertex set $\left\{ X_1,X_2,\ldots ,X_m \right\} $ and a directed edge from $X_j$ to $X_i$ iff $j\in Pa(i)$. 

An important subclass of BNs whose properties we discuss later, are those which are called decomposable.  A \emph{decomposable} BN $\mathcal{G}$ is a BN in which every parent set of each node $X_i$ in the graph forms a complete subgraph of $\mathcal{G}$.  It is simple to show that any BN can (albeit inefficiently) be re-expressed in a decomposable BN containing it \citep{lauritzen1996, korb-nicholson2010, jim_book}.  This property, widely used for propagation algorithms, can also be used for robustness analyses. 

When a BN is decomposable it can be shown \citep[see][]{lauritzen1996, jim_book} that the joint density factors in the following way.  The \emph{cliques} $\left\{ \boldsymbol{X}_{C_1},\boldsymbol{X}_{C_2},\boldsymbol{X}_{C_3},\ldots ,\boldsymbol{X}_{C_m}\right\} $ i.e. the maximal connected subsets of the decomposable graph $\mathcal{G}$ can be totally ordered starting with any clique, label this $\boldsymbol{X}_{C_1}$.  We call $\boldsymbol{X}_{S(i)}$ where $S_i=C_i  \cap \cup_{k=1}^{i-1} C_k$ the \emph{separator} of $\boldsymbol{X}_{C_i}$ from $\cup_{k=1}^{i-1} C_k$. An indexing is said to satisfy the \emph{running intersection property} if for all $i=1,2,\hdots,m$ there exists some index $j < i$ such that $S_i = C_i \cap \cup_{k=1}^{i-1} C_k \subseteq C_j$. This implies that the result of intersecting a clique with all previous cliques is contained within one or more earlier cliques \cite{lauritzen1996, jim_book}.  Note the choice of $C_j$ may not be unique

We can depict one of these choices of order and containment by a \emph{junction tree} $\mathcal{J(G)}$.  This is an undirected tree whose vertices are $\left\{ \boldsymbol{X}_{C_1},\boldsymbol{X}_{C_2},\boldsymbol{X}_{C_3},\ldots \boldsymbol{X}_{C_m}\right\} $ and whose $m-1$ undirected edges simply connect $\boldsymbol{X}_{C_i}$\ to $\boldsymbol{X}_{C_{j(i)}}, i=2,3,\ldots ,m$.  Note that these edges can be labelled by a corresponding separator of $\mathcal{G}$.  Here we will for simplicity assume that the entries of the joint mass function $p(\boldsymbol{x})$ are all strictly positive, although this is not strictly necessary \citep[see][]{lauritzen_spiegelhalter_1988}.  In fact this is advised from a practical point of view by a number of authors e.g. \cite{korb-nicholson2010} when dealing with no known functional relations.  It can then be proved \citep[e.g. see][]{cowell_et_al_2007, jim_book} that $p(\boldsymbol{x})$ of any such decomposable BN $\mathcal{G}$ respects the following formula:
\begin{equation*}
p(\boldsymbol{x})=\frac{p(\boldsymbol{x}_{C_1}).p(\boldsymbol{x}_{C_2}).p(\boldsymbol{x}_{C_3})\ldots p(\boldsymbol{x}_{C_m})}{p(\boldsymbol{x}_{S_2}).p(\boldsymbol{x}_{S_3})\ldots p(\boldsymbol{x}_{S_m})}.  \label{decompratfn}
\end{equation*}
One straightforward but important consequence of this decomposition used later is that given any BN $\mathcal{G}$ and an associated junction tree $\mathcal{J(\mathcal{G})}$, then  for any two cliques $\boldsymbol{X}_{C_1},\boldsymbol{X}_{C_k}$ there is a \emph{unique} sequence of cliques $\left( \boldsymbol{X}_{C_1},\boldsymbol{X}_{C_2},\boldsymbol{X}_{C_3},\ldots ,\boldsymbol{X}_{C_k}\right) $ with no repeats, and separators $\left( \boldsymbol{X}_{S_2},\boldsymbol{X}_{S_3},\ldots ,\boldsymbol{X}_{S_k}\right) $ between $\boldsymbol{X}_{C_1}$ and $\boldsymbol{X}_{C_k}$ within $\mathcal{J(\mathcal{G})}$, called a \textit{simple path}.  If we write $\overline{C}_k\triangleq \cup _{j=1}^{k}C_j$, then since $S_i \subseteq C_i$ we know that $\boldsymbol{x}_{S_i}$ is a subvector of $\boldsymbol{x}_{C_i}$ giving 
\begin{align}
p(\boldsymbol{x}_{\overline{C}_k}) &=\frac{p(\boldsymbol{x}_{C_1}).p(\boldsymbol{x}_{C_2}).p(\boldsymbol{x}_{C_3}),\ldots p(\boldsymbol{x}_{C_k})}{p(\boldsymbol{x}_{S_2}).p(\boldsymbol{x}_{S_3}),\ldots p(\boldsymbol{x}_{S_k})}  \label{decomposable sequence} \\
&=p(\boldsymbol{x}_{C_1}).p(\boldsymbol{x}_{C_2}|\boldsymbol{x}_{S_2}).p(\boldsymbol{x}_{C_3}|\boldsymbol{x}_{S_3})\ldots p(\boldsymbol{x}_{C_k}|\boldsymbol{x}_{S_k}).  \nonumber 
\end{align}
\begin{lemma}
\label{junction tree altered}
It follows from Equation \ref{decomposable sequence} and the conditional independence in $\mathcal{G}$ that if $T_{1,k}\triangleq \cup _{i=3}^{k-1}S_i$ then
\begin{equation*}
p(\boldsymbol{x}_{C_1\cup C_k})=\sum_{\boldsymbol{x}_{T_{1,k}}}p(\boldsymbol{x}_{C_1}).p(\boldsymbol{x}_{S_3}|\boldsymbol{x}_{S_2}).p(\boldsymbol{x}_{S_4}|\boldsymbol{x}_{S_3})\ldots p(\boldsymbol{x}_{S_k}|\boldsymbol{x}_{S_{k-1}})p(\boldsymbol{x}_{C_k}|\boldsymbol{x}_{S_k}).  \label{JTresult}
\end{equation*}
\end{lemma}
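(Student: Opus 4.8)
The plan is to obtain $p(\boldsymbol{x}_{C_1\cup C_k})$ by marginalising the joint mass function written in the decomposable form of Equation~\ref{decomposable sequence}, and then to collapse each intermediate clique factor into a separator-to-separator transition using the running intersection property. Writing $\overline{C}_k=\cup_{j=1}^k C_j$, I would start from
\begin{equation*}
p(\boldsymbol{x}_{C_1\cup C_k})=\sum_{\boldsymbol{x}_{\overline{C}_k\setminus(C_1\cup C_k)}} p(\boldsymbol{x}_{C_1})\,p(\boldsymbol{x}_{C_2}\mid\boldsymbol{x}_{S_2})\cdots p(\boldsymbol{x}_{C_k}\mid\boldsymbol{x}_{S_k}),
\end{equation*}
which is simply Equation~\ref{decomposable sequence} summed over the variables of the simple path that lie outside the two end cliques.

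The key structural observation, and where the conditional independence in $\mathcal{G}$ enters, is that along the simple path each separator obeys $S_{i+1}\subseteq C_i$, and that the running intersection property forces any variable shared between $C_i$ and an earlier clique to lie in $S_i$, and any variable shared with a later clique to lie in $S_{i+1}$. Consequently, for each interior index $2\le i\le k-1$ the ``private'' variables $A_i\triangleq C_i\setminus(S_i\cup S_{i+1})$ occur in one and only one factor of the product, namely $p(\boldsymbol{x}_{C_i}\mid\boldsymbol{x}_{S_i})$, and they belong to neither $C_1$ nor $C_k$.

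I would then carry out the marginalisation one clique at a time. Summing a single factor over its private block gives
\begin{equation*}
\sum_{\boldsymbol{x}_{A_i}}p(\boldsymbol{x}_{C_i}\mid\boldsymbol{x}_{S_i})=p(\boldsymbol{x}_{S_i\cup S_{i+1}}\mid\boldsymbol{x}_{S_i})=p(\boldsymbol{x}_{S_{i+1}}\mid\boldsymbol{x}_{S_i}),
\end{equation*}
the last step because $S_i$ is already in the conditioning set. Since each $A_i$ sits in exactly one factor, these sums can be performed independently and in sequence for $i=2,3,\ldots,k-1$, leaving $p(\boldsymbol{x}_{C_1})$ and the terminal factor $p(\boldsymbol{x}_{C_k}\mid\boldsymbol{x}_{S_k})$ untouched. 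This telescopes the intermediate clique factors into the transition chain $p(\boldsymbol{x}_{S_3}\mid\boldsymbol{x}_{S_2})\cdots p(\boldsymbol{x}_{S_k}\mid\boldsymbol{x}_{S_{k-1}})$. Finally, because $S_2\subseteq C_1$ and $S_k\subseteq C_k$ remain free, the only variables still to be summed are the interior separators, whose union is exactly $T_{1,k}=\cup_{i=3}^{k-1}S_i$, giving the claimed identity. An equivalent route is induction on $k$, the base case $k=2$ being Equation~\ref{decomposable sequence} with $T_{1,2}=\varnothing$, and the inductive step attaching $C_k$ through $S_k\subseteq C_{k-1}$.

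The part demanding the most care is the bookkeeping of the support: one must verify via the running intersection property that the variables removed by the private-block sums, together with those in $T_{1,k}$, account for precisely $\overline{C}_k\setminus(C_1\cup C_k)$, and in particular that no variable of $C_1$ or $C_k$ is inadvertently summed. This is the only place where the tree structure, rather than mere factorisation, is essential, and it is what guarantees that each collapse $p(\boldsymbol{x}_{C_i}\mid\boldsymbol{x}_{S_i})\mapsto p(\boldsymbol{x}_{S_{i+1}}\mid\boldsymbol{x}_{S_i})$ leaves the factors on either side undisturbed.
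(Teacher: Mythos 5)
Your proof is correct and follows the route the paper intends: the paper states this lemma without any explicit proof, asserting only that it follows from Equation \ref{decomposable sequence} and the conditional independence structure, and your marginalisation of the telescoped product --- collapsing each interior clique factor $p(\boldsymbol{x}_{C_i}|\boldsymbol{x}_{S_i})$ to a separator transition $p(\boldsymbol{x}_{S_{i+1}}|\boldsymbol{x}_{S_i})$ via the running intersection property --- is exactly the argument being gestured at. Your closing remark about checking that the variables removed by the private-block sums together with $T_{1,k}$ account for precisely $\overline{C}_k\setminus(C_1\cup C_k)$ identifies the one genuinely delicate point, which the paper itself elides.
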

Thus we have a formula for the joint mass function of a ``donating'' clique $C_1$ and a ``target'' clique $C_k$ depending on $C_1$, expressed in terms of a sequence of transitions in a non-homogeneous Markov Chain.  Although this property derives directly from the elementary properties of trees it is important, and an often overlooked property.  It means that standard results from non-homogeneous Markov Chain theory can be used to measure the extent of the diminishing effect of information as it passes along this simple path.  In particular it is well-known that variation distance in an ergodic, acyclic Markov Chain contracts as information is propagated through the system.  The observation in Lemma \ref{junction tree altered} is therefore critical to the development of some of the robustness bounds we develop here.

\subsection{Applying a Bayesian Network in Practice}

A BN is generally selected in one of two ways.  Occasionally we may have access to a complete training data set from which we can select the most promising explanatory BN whose associated $p(\boldsymbol{x})$ respecting Equation \ref{BN fact} appears to best fit the data.  There are many ways to do this, including using software packages such as `bnlearn' in R \citep[see][]{scutari_book}.  However, when applying such a model selection method in practice, even for low dimensional BNs, it is common to find many models score similarly well.  A BN may not adequately describe all features in the data set.  Even if we know this model to be true, as in a simulation exercise or even a moderately sized problem, it has been demonstrated that the best model is only close to the generating process, unless the training data set is absolutely enormous \cite{cussens2011}.  There are also the obvious statistical errors associated with the representativeness of the data set used, even if sampling is performed at random.  Hence it is rare for a single data generating model to be unequivocally identified. Considering the robustness of the critical outputs of the fitted model is therefore a critical element of any ensuing statistical analysis. 

The second way to create a BN is by performing a direct elicitation from an expert.  Here, having listed the variables in an order which might be compatible with the sequence in which those measurements may occur, the expert is asked for each $X_i$ ($i=2,3,\ldots ,n$) of the previously mentioned variables which might be relevant to forecasting it.  Building on this qualitative framework, hopefully faithful to the expert's actual judgements, we then proceed to embellish the graph by supplementing the structure with the specification of the corresponding CPTs $\left\{ p(x_i|\boldsymbol{x}_{P_a(i)}):i=1,2,3,\ldots ,m\right\}$.  These probabilities will be subject to elicitation error, although the preceding structural elicitation process aims to mitigate this specification error \citep{korb-nicholson2010, EFSA, jim_book}.  Again an understanding of the robustness to perturbations of the hypothesised graphical framework and also the entries in the CPTs of any inferential assumptions we make here, will clearly be critical to a good statistical analysis. 

\subsection{Applications}
\subsubsection{Food Security System}
To illustrate the uses and practicalities of our results we shall be using the Food Security Integrated Decision Support System (IDSS) described in greater depth in \cite{bookchapter} and \cite{SBL2015}.  The aim of this massive multi-layered dynamic BN is to ascertain which local government policies influence or improve the level of food poverty within the UK.  However, the targeted user here is primarily interested in three specific classes of outputs: Health (of its local constituents), Educational Attainment of children and measures of Social Cohesion \citep{SBL2015}, measures of which, in the terminology of this paper, will form our final vector of target variables.  

The overarching IDSS model is a DBN model as shown in Figure \ref{IDSS_levels} in which the target nodes are classified as Level 1.  Each component of this model can be broken down into detailed subnetworks.  For example, the Level 2 `UK Food Costs' depends on the availability of food, production costs and so on.  Specifically we may be interested in access to healthy food necessities such as fruit and vegetables which rely heavily on pollinator abundance.  A sub-subnetwork to determine the factors which affect the pollinator abundance is therefore required, a fragment of which is shown in Figure \ref{Pollinator_BN}.  A subset of this BN has been elicited from experts and the results can be found in \cite{Pollinator_workshop_JAR_paper}.

\begin{figure}[H]
\begin{center}
\includegraphics[width=\textwidth,trim={0cm 2cm 0cm 1cm},clip]{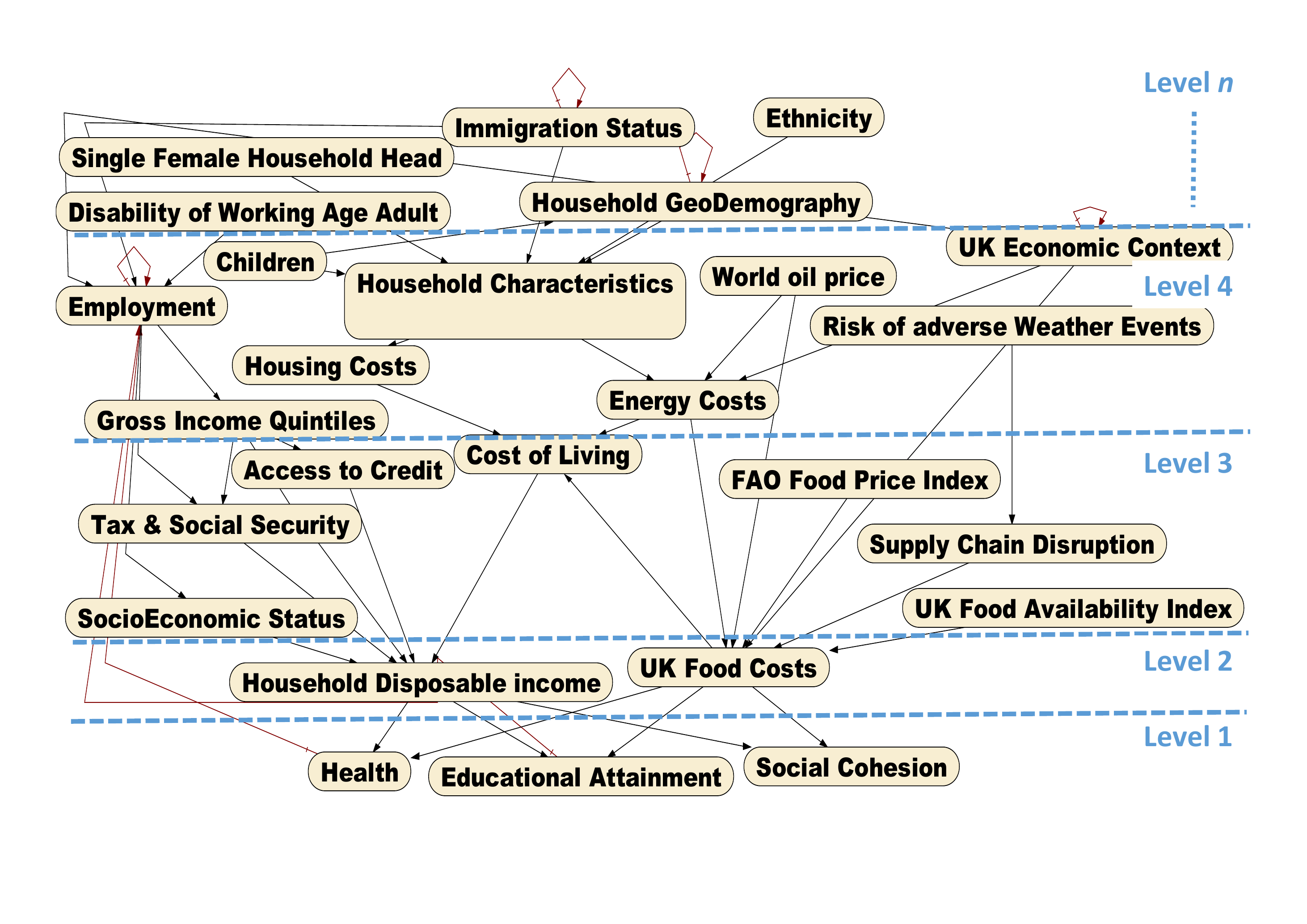}
\caption{Food Security IDSS, red arcs indicating dynamic relationships, from \cite{bookchapter}.}
\label{IDSS_levels}
\end{center}
\end{figure}

\begin{figure}[H]
\begin{center}
\includegraphics[width=0.8\textwidth]{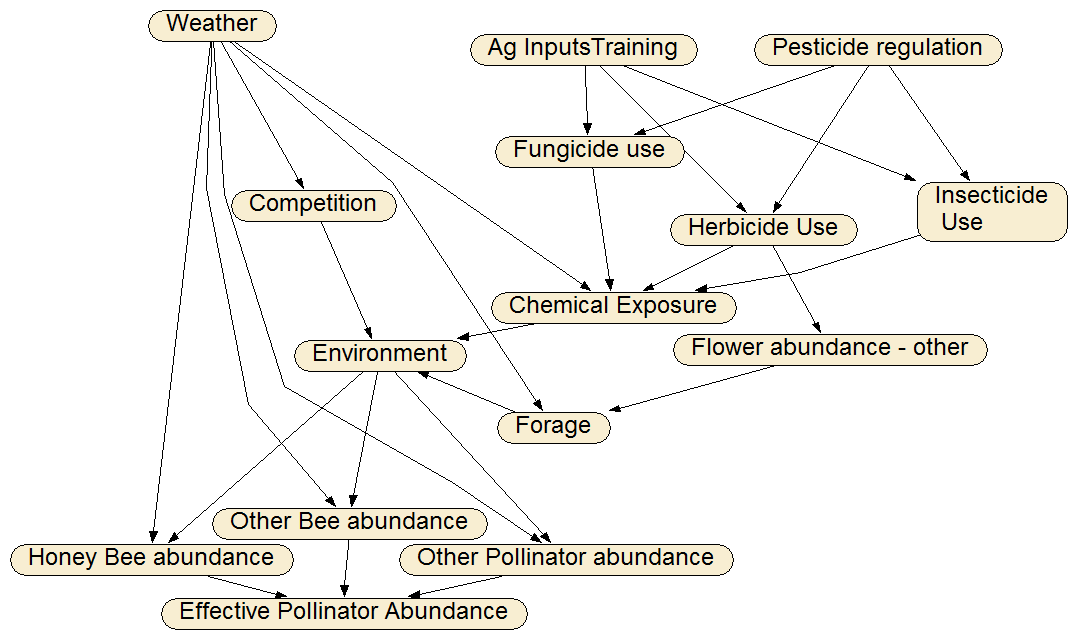}
\caption{Fragment of the pollinator abundance BN sub-subnetwork, from \cite{bookchapter}.}
\label{Pollinator_BN}
\end{center}
\end{figure}

\subsubsection{An Ecological Demonstration}
\label{NativeFishExample}
To illustrate our approach we also use a well known ecological BN called the ``Native Fish" example as introduced in \cite{NativeFish} and discussed further in \cite{FloresNicholson}.  This BN was designed specifically for demonstration purposes, notably introducing non-statisticians to BNs, and is therefore simplified version of a much more complicated process.  However, because the meaning of its variables are transparent and its topology (Version 2 of this model) is just large enough to demonstrate our arguments, this DSS is ideal for illustrating some of our methods. 

This ecological BN is used to model the impact on native fish abundance of pesticide usage on surrounding fields as well as levels of rainfall.  The structure of the BN is given in Figure \ref{NativeFish_BN}.  Our target node is `Native Fish Abundance'.

\begin{figure}[H]
\begin{center}
\includegraphics[scale=0.60,trim={1cm 16cm 4cm 1cm},clip]{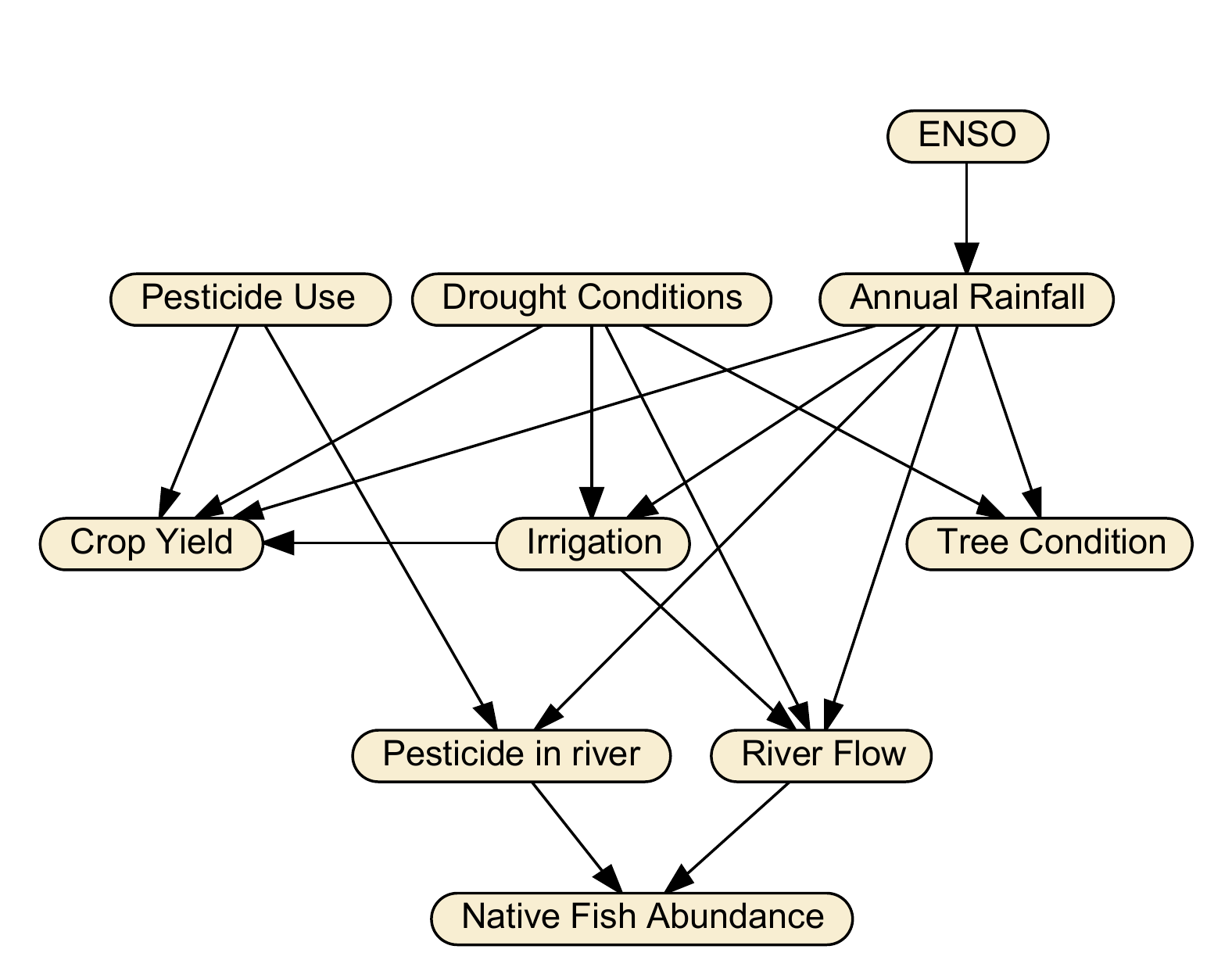}
\caption{``Native Fish" Version 2 BN structure, from \cite{NativeFish}.}
\label{NativeFish_BN}
\end{center}
\end{figure}

\section{Properties of Total Variation Distance for BNs}
\label{TVdist_section}
We begin by outlining the total variation distance, highlighting some of its useful properties which we can directly apply to this robustness analyses. 

Assume $\boldsymbol{X}\triangleq \left( X_1,X_2,\ldots ,X_m\right)$ is a vector of finite discrete random variables taking values $\boldsymbol{x}=\left( x_1,x_2,\ldots ,x_m \right) \in \mathbb{X}_1\times \mathbb{X}_2 \times \cdots \times \mathbb{X}_m$. Let $\boldsymbol{X}_A$, taking values $\boldsymbol{x}_A\in $ $\mathbb{X}_{i_1}\times \mathbb{X}_{i_2}\times \cdots \times \mathbb{X}_{i_{r(A)}}\triangleq \mathbb{X}_A$, denote the subvector of $\boldsymbol{X}$ comprising those components with indices $i\in A$, where $A=\left\{ i_1,i_2,\ldots ,i_{r(A)}\right\} $ denotes a subset of $\left\{ 1,2,\ldots ,m\right\} $.  Let $p_A,$ $q_A$ denote a hypothesised and an alternative joint mass function on $\boldsymbol{X}_A$ and $\mathbb{P}_A(E), \mathbb{Q}_A(E)$ denote the probability with respect to the mass functions $p_A, q_A$ respectively of the set $E=E_{i_1}\times E_{i_2}\times \cdots \times E_{i_{r(A)}}$ where $E_{i_j}\subseteq \mathbb{X}_{i_j}$, $j=1,2,\ldots r(A)$.  Nearly all inferential methodology and certainly all robustness analyses focus on properties of such events \cite[][]{jim_book}. 

\begin{definition}
The \emph{(Total) Variation distance}, $d_V(p_A,q_A)$, is defined in the discrete casee by
\begin{equation*}
d_V(p_A,q_A)\triangleq \frac{1}{2} \sum_{\boldsymbol{x}_A \in \mathbb{X}_A}\left\vert p_A-q_A\right\vert 
\end{equation*}
\end{definition}

\subsection{Variation under Marginalisation and Conditioning}
Measures of variation distance can be applied directly to CPTs.  In this section we define some new objects which will be especially useful in our later development. 

Let $P$ and $Q$, with rows $\boldsymbol{p}_i, \boldsymbol{q}_i$
respectively for $i=1,2,\ldots ,n$, be two $n\times n^{\prime }$ CPT matrices of a random vector $\boldsymbol{Y}$, taking $n^{\prime }$ levels given another random vector $\boldsymbol{X}$, taking $n$ levels.  For a BN, $\boldsymbol{Y}$ will typically be a random variable whilst $\boldsymbol{X}$ will be the vector of its parents; nevertheless when studying junction trees it is also helpful to consider cases when $\boldsymbol{Y}$ is a vector.

There is a natural variation distance we can now define between $P$ and $Q$:

\begin{definition}
Let the variation distance $d_V^{+}\left( P,Q\right)$ between conditional probability tables $P$ and $Q$ be defined by 
\begin{align*}
d_V^{+}\left( P,Q\right) \triangleq \max_{1\leq i\leq n}d_V(\boldsymbol{p}_i,\boldsymbol{q}_i). 
\end{align*}
\end{definition}

\begin{example}
\label{example_TreeCondition_CPT}
Assume that the CPT in \cite{NativeFish}, represented by the transition matrix $P$ below, gives the elicited combined matrix of a panel of experts using a standard protocol \cite[see][for example]{EFSA}. Suppose expert A's individually elicited elicited probabilities are given by the matrix Q:
\begin{table}[H]
\begin{centering}
\begin{tabular}{|c|c|K{1.7cm}|K{1.7cm}|K{1.7cm}|} 
\hline
\multirow{2}{*}{Drought Conditions} & \multirow{2}{*}{Annual Rainfall} & \multicolumn{3}{c|}{$P(Tree Condition|Drought,Rainfall)$} \\
\cline{3-5} %this adds horiztonal line under cells 3 to 5 only
 &  & Good & Damaged & Dead \\
\hline
yes & below average & 0.2 & 0.6 & 0.2 \\
\hline
yes & average & 0.25 & 0.6 & 0.15 \\
\hline
yes & above average & 0.3 & 0.6 & 0.1 \\
\hline
no & below average & 0.7 & 0.25 & 0.05 \\
\hline
no & average & 0.8 & 0.18 & 0.02 \\
\hline
no & above average & 0.9 & 0.09 & 0.01 \\
\hline
\end{tabular}
\end{centering}
\end{table}
We can simplify this to matrix form, denoted by $P$.  Let us assume that this CPT was elicited from experts who disagree on a couple of probabilities resulting in an alternate CPT, $Q$:
\begin{equation*}
P = \begin{pmatrix}
0.20 & 0.60 & 0.20 \\
0.25 & 0.60 & 0.15 \\
0.30 & 0.60 & 0.10 \\
0.70 & 0.25 & 0.05 \\
0.80 & 0.18 & 0.02 \\
0.90 & 0.09 & 0.01
\end{pmatrix}, \hspace{1cm}
Q = \begin{pmatrix}
0.20 & 0.60 & 0.20 \\
0.30 & 0.50 & 0.20 \\
0.30 & 0.60 & 0.10 \\
0.65 & 0.25 & 0.10 \\
0.80 & 0.18 & 0.02 \\
0.90 & 0.10 & 0.00
\end{pmatrix}.
\end{equation*}
We can now compute $d_V^{+}\left( P,Q\right)=\max\{0, 0.10, 0, 0.05, 0, 0.01 \}=0.1$.  Expert A will often be concerned that the substitution of $P$ for $Q$ will not effect significantly the conclusions about the target variables of this panel.  We show below how we can use the diameter calculation to directly measure, in a formal sense, the extent of this substitution and thus allay Expert A's fears that the panel's judgement might be substantially at variance with their own. 
\end{example}

Note that if $\boldsymbol{\rho }(P)$ and $\boldsymbol{\rho }(Q)$ are the vectors of marginal mass functions of $\boldsymbol{Y}$ and $\boldsymbol{\pi }$ is a margin on $\boldsymbol{X}$ then for all possible margins $\boldsymbol{\pi }$ 
\begin{align*}
d_V\left( \boldsymbol{\rho }(P),\boldsymbol{\rho }(Q)\right) \leq d_V^{+}\left( P,Q\right), 
\end{align*}
where $d_V\left( \boldsymbol{\rho }(P),\boldsymbol{\rho }(Q)\right) =d_V^{+}\left( P,Q\right)$ whenever $\boldsymbol{\pi }$ puts all its mass on atoms indexed by $i^{+}$ where
\begin{align*}
i^{+}\triangleq \arg \max_{1\leq i\leq n}d_{V}(\boldsymbol{p}_i,\boldsymbol{q}_i).
\end{align*}
Thus we have that for all possible margins $\boldsymbol{\pi }$ 
\begin{align*}
d_V\left( \boldsymbol{\rho }(P),\boldsymbol{\rho }(Q)\right) \leq d_V^{+}\left( P,Q\right).
\end{align*}
This therefore gives rather coarse, but quick bounds which require only comparisons of the pairs of individual rows of the perturbed CPT. 

Earlier we highlighted that when eliciting a BN we first elicit hypotheses of conditional independence.  Only then do we expand this with a full probability specification through the numerical values in its CPTs.  So we next consider robustness measures associated with small deviations from conditional independence.  The definition we present below is, to our knowledge, a new construction using variation distance on CPTs to determine the measure of dependence between variables.
\begin{definition}
The \emph{diameter}, $d^{+}(P),$ and the \emph{I-local diameter} $d^{I+}(P)$ of a stochastic matrix $P\triangleq
\left\{ p_{ij}\right\} $ are respectively defined as 
\begin{eqnarray*}
d^{+}(P) &=& \frac{1}{2} \max_{1\leq i,i^{\prime }\leq n}\left\{ \sum_{j=1}^{n^{\prime }}\left\vert p_{ij}-p_{i^{\prime }j}\right\vert \right\}, \\
d^{I+}(P) &=& \frac{1}{2} \max_{i,i^{\prime }\in I}\left\{ \sum_{j=1}^{n^{\prime }}\left\vert p_{ij}-p_{i^{\prime }j}\right\vert \right\}.
\end{eqnarray*}
\end{definition}
\begin{example}
The values of diameters (typical of those found in many exercises) for each of the CPTs of the Native Fish BN from \cite{FloresNicholson}, together with those obtained in an elicitation exercise associated with the pollinator example \cite{Pollinator_workshop_JAR_paper} are given in the tables below. Discrepancies passing through CPTS with diameters close to $1$ might be retained as different target distributions. However, once discrepancies pass through more than two CPTs with diameters less than $0.7$ these usually 	quickly dissolve, for reasons we discuss below. 

\begin{figure}[H]
\begin{center}
\begin{tabular}{ | l | c | l | c|}
    \hline
    \textbf{Node} & \textbf{Diameter} & \textbf{Node} & \textbf{Diameter} \\ \hline
    Annual Rainfall & 0.65 & Crop Yield & 0.98 \\ \hline
    River Flow & 0.98 & Irrigation & 0.94 \\ \hline
    Pesticide in River & 0.7 &  Tree Condition & 0.7 \\ \hline
    Native Fish Abundance & 0.84 & & \\ 
    \hline
\end{tabular}
\caption{Diameters of each CPT in the Native Fish BN.}
\end{center}
\end{figure}

\begin{figure}[H]
\begin{center}
\begin{tabular}{ | l | c |}
    \hline
    Honey Bee Abundance & 0.66 \\ \hline
    Other Bee Abundance & 0.55 \\ \hline
    Other Pollinator Abundance & 0.54 \\ 
    \hline
\end{tabular}
\caption{Diameters of each CPT in Pollinator sub-network BN.}
\end{center}
\end{figure}
\end{example}

The size of the diameter of a conditional probability table $P$ is a measure of the dependence of $\boldsymbol{Y}$ on $\boldsymbol{X}$.  This is because whenever $\boldsymbol{Y}\amalg \boldsymbol{X}$ all rows of $P$ will be equal and so $d^{+}(P)=0$.  It is easy to check that whenever some non-trivial function $\tau (\boldsymbol{Y)}$ of $\boldsymbol{Y}$ can be written as a deterministic function of $\boldsymbol{X}$ then $d^{+}(P)=1$, its maximum value.  So when there is only a weak relationship between $\boldsymbol{Y}$ and $\boldsymbol{X}$, in the sense that changing the different levels of $\boldsymbol{X}$ impacts only slightly on the conditional mass function of $\boldsymbol{Y}$, then $d^{+}(P)\bumpeq 0$.  Note that unless $P$ is symmetric, the diameter of $\boldsymbol{Y}$ on $\boldsymbol{X}$ is not the same as the diameter of $\boldsymbol{X}$ on $\boldsymbol{Y}$, in fact the difference between these can be arbitrarily close to $1$ \cite[see][]{skw_thesis}.

The $I$-local diameter has the same property, where this time it is  conditional on $\boldsymbol{X}$ taking values only in the set of levels $I$.  This is useful when comparing the efficacy of deleting a parent in a BN or when combining a collection of rows of the CPT/levels of $\boldsymbol{X}$ into a single entry: see below. 

\subsection{Variation and Mixtures}

\subsubsection{Approximations associated with mixing}
A useful and well-known property of total variation is its convexity under mixing in the following sense. Let $\boldsymbol{\pi }=(\pi _1,\pi _2,\ldots ,\pi _n)$, $\boldsymbol{\pi }^{\prime }=(\pi _1^{\prime },\pi _2^{\prime },\ldots ,\pi _{n^{\prime }}^{\prime })$, $\boldsymbol{q}_i=(q_{i1},q_{i2},\ldots ,q_{in})$, $\boldsymbol{p}_i=(p_{i1},p_{i2},\ldots ,p_{in^{\prime }})$ and define
\begin{align*}
\boldsymbol{q}_{\boldsymbol{\pi }}\triangleq \sum_{i=1}^{n}\pi _{i}\boldsymbol{q}_i, \hspace{1cm} \boldsymbol{p}_{\boldsymbol{\pi }^{\prime }}\triangleq \sum_{i=1}^{n^{\prime }}\pi _i^{\prime }\boldsymbol{p}_i,
\end{align*}
then 

\begin{lemma}
\label{mixing_lemma}
\begin{equation*}
d_V(\boldsymbol{p}_{\boldsymbol{\pi }^{\prime }},\boldsymbol{q}_{\boldsymbol{\pi }})\leq \sum_{i=1}^{n}\sum_{i^{\prime }=1}^{n^{\prime }}\pi_i\pi _{i^{\prime }}^{\prime }d_V(\boldsymbol{p}_{i^{\prime }},\boldsymbol{q}_i).  \label{convexity of TV}
\end{equation*}
\end{lemma}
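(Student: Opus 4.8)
The plan is to reduce the statement to the standard convexity (sub-additivity) of total variation under a \emph{common} mixing distribution, which itself follows at once from the triangle inequality for the $\ell_1$ norm. The only thing that obstructs a direct appeal to that familiar fact is that the two mixtures $\boldsymbol{p}_{\boldsymbol{\pi}'}$ and $\boldsymbol{q}_{\boldsymbol{\pi}}$ are assembled from different numbers of components ($n'$ versus $n$) and, more significantly, weighted by different probability vectors ($\boldsymbol{\pi}'$ versus $\boldsymbol{\pi}$). My first step is therefore to re-express both mixtures over a common product index set so that they carry \emph{identical} weights; this is the crux of the argument, after which the bound is routine.

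First I would exploit that $\boldsymbol{\pi}$ and $\boldsymbol{\pi}'$ are probability vectors, so $\sum_{i=1}^{n}\pi_i=1$ and $\sum_{i'=1}^{n'}\pi_{i'}'=1$. Inserting these ``partitions of unity'' gives
\[
\boldsymbol{p}_{\boldsymbol{\pi}'} = \left(\sum_{i=1}^{n}\pi_i\right)\sum_{i'=1}^{n'}\pi_{i'}'\,\boldsymbol{p}_{i'} = \sum_{i=1}^{n}\sum_{i'=1}^{n'}\pi_i\pi_{i'}'\,\boldsymbol{p}_{i'},
\]
and symmetrically
\[
\boldsymbol{q}_{\boldsymbol{\pi}} = \left(\sum_{i'=1}^{n'}\pi_{i'}'\right)\sum_{i=1}^{n}\pi_i\,\boldsymbol{q}_{i} = \sum_{i=1}^{n}\sum_{i'=1}^{n'}\pi_i\pi_{i'}'\,\boldsymbol{q}_{i}.
\]
Both are now mixtures over the same index set $\{(i,i')\}$ with the common weights $w_{i,i'}\triangleq\pi_i\pi_{i'}'$, which are nonnegative and satisfy $\sum_{i,i'}w_{i,i'}=1$.

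Next I would expand $d_V$ through its definition as one half the $\ell_1$ distance, collect the difference of the two mixtures inside a single sum over the outcome index $j$, and then apply the triangle inequality for absolute values followed by an interchange of the (finite) order of summation:
\[
d_V(\boldsymbol{p}_{\boldsymbol{\pi}'},\boldsymbol{q}_{\boldsymbol{\pi}}) = \frac{1}{2}\sum_{j}\left|\sum_{i,i'}w_{i,i'}\bigl(p_{i'j}-q_{ij}\bigr)\right| \leq \sum_{i,i'}w_{i,i'}\left(\frac{1}{2}\sum_{j}\left|p_{i'j}-q_{ij}\right|\right) = \sum_{i=1}^{n}\sum_{i'=1}^{n'}\pi_i\pi_{i'}'\,d_V(\boldsymbol{p}_{i'},\boldsymbol{q}_i),
\]
which is exactly the claimed inequality. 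The one point meriting care is bookkeeping rather than analysis: after the common-weight rewriting the term indexed by $(i,i')$ compares $\boldsymbol{p}_{i'}$ against $\boldsymbol{q}_i$ (not $\boldsymbol{p}_i$ against $\boldsymbol{q}_i$), so the inner $\ell_1$ sum correctly reproduces $d_V(\boldsymbol{p}_{i'},\boldsymbol{q}_i)$ with the intended pair of indices. Since every step is an equality whenever the weights $w_{i,i'}$ concentrate on a single pair, the bound is attained in that degenerate case and no smaller constant is possible.
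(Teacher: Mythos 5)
Your proof is correct, and it takes a genuinely different route from the paper's. The paper builds the result in three stages: it first proves the two-component case $d_V(\mathbf{p},\mathbf{q}_\alpha)\leq(1-\alpha)d_V(\mathbf{p},\mathbf{q}_0)+\alpha d_V(\mathbf{p},\mathbf{q}_1)$ by a sign-splitting case analysis on each coordinate, then extends this by induction on $n$ to a general finite mixture compared against a fixed $\mathbf{p}$, and finally obtains the two-sided statement by applying that one-sided lemma twice, using the symmetry of $d_V$ to mix first over the $\boldsymbol{q}_i$ and then, for each fixed $\boldsymbol{q}_i$, over the $\boldsymbol{p}_{i'}$. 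You instead insert the partitions of unity $\sum_i\pi_i=1$ and $\sum_{i'}\pi_{i'}'=1$ to rewrite both mixtures over the common product index set with weights $w_{i,i'}=\pi_i\pi_{i'}'$, after which a single application of the triangle inequality for the $\ell_1$ norm finishes the argument. Your version is shorter and avoids both the induction and the case analysis; what the paper's longer route buys is the intermediate one-sided lemmas (its Lemmas A1 and A2), which it reuses elsewhere (e.g.\ in the proof of Theorem \ref{theorem_totvar_diameter}), whereas your argument delivers only the final statement. Your closing observation that equality holds when the product weights concentrate on a single pair is a correct but minor aside, since that requires both $\boldsymbol{\pi}$ and $\boldsymbol{\pi}'$ to be degenerate.
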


\begin{proof}
See Appendix \ref{appendixA1}.
\end{proof}

In particular, if we know extremal distributions are small then so are convex linear combinations of these.  Such processes occur for example in the calculation of a  margin: here of a target variable. This enables us to prove a number of useful results concerning the contraction of error under learning in a BN: see below. 

Combining our new definitions of diameter with variation distance we prove the following result that enables us to track this distance through a given BN:

\begin{theorem}
\label{theorem_totvar_diameter}
Let $\boldsymbol{\pi }_1$ and $\boldsymbol{\pi }_2$ be two possible margins of vectors $\boldsymbol{X}$ and $\boldsymbol{Y}$ of random variables and suppose that $P(\boldsymbol{Y|X})$ is the (shared) CPT of the concatenated levels of the conditional $\boldsymbol{Y}|\boldsymbol{X}$ and
that $\boldsymbol{\rho }_{1}$ and $\boldsymbol{\rho }_{2}$ are the margins of $Y$. Then 
\begin{align*}
d_V(\boldsymbol{\rho }_1,\boldsymbol{\rho }_2)\leq d^{+}(P(\boldsymbol{Y|X}))d_V(\boldsymbol{\pi }_1,\boldsymbol{\pi }_2). 
\end{align*}
\end{theorem}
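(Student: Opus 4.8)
The plan is to recognise the quantity $d^{+}(P(\boldsymbol{Y|X}))$ as (half the maximal $L^1$ separation between the rows of the transition matrix, i.e.) the Dobrushin ergodicity coefficient of the CPT, and to exploit the fact that each $\boldsymbol{Y}$-margin is obtained from the corresponding $\boldsymbol{X}$-margin by right-multiplication by this matrix. Writing the distributions as row vectors and letting $P = P(\boldsymbol{Y|X})$ have rows $\boldsymbol{p}_i$, I would first record that $\boldsymbol{\rho}_1 = \boldsymbol{\pi}_1 P$ and $\boldsymbol{\rho}_2 = \boldsymbol{\pi}_2 P$, so that componentwise $\rho_{k,j} = \sum_i \pi_{k,i}\, p_{ij}$.

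The central device is to decompose the signed difference $\boldsymbol{\mu} \triangleq \boldsymbol{\pi}_1 - \boldsymbol{\pi}_2$ into its positive and negative parts $\boldsymbol{\mu}^{+}$ and $\boldsymbol{\mu}^{-}$. Since $\boldsymbol{\pi}_1$ and $\boldsymbol{\pi}_2$ are both probability vectors we have $\sum_i \mu_i = 0$, so the two parts carry the same total mass $M \triangleq \sum_i \mu_i^{+} = \sum_i \mu_i^{-} = \tfrac{1}{2}\sum_i |\mu_i| = d_V(\boldsymbol{\pi}_1,\boldsymbol{\pi}_2)$. If $M = 0$ then $\boldsymbol{\pi}_1 = \boldsymbol{\pi}_2$ and the inequality is immediate, so I assume $M > 0$ and normalise, setting $\boldsymbol{a} = \boldsymbol{\mu}^{+}/M$ and $\boldsymbol{b} = \boldsymbol{\mu}^{-}/M$, two genuine probability distributions on the levels of $\boldsymbol{X}$.

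With this decomposition $\boldsymbol{\rho}_1 - \boldsymbol{\rho}_2 = \boldsymbol{\mu} P = M(\boldsymbol{a}P - \boldsymbol{b}P)$, whence $d_V(\boldsymbol{\rho}_1,\boldsymbol{\rho}_2) = M \cdot d_V(\boldsymbol{a}P,\boldsymbol{b}P)$. Now $\boldsymbol{a}P$ and $\boldsymbol{b}P$ are both mixtures of the rows $\boldsymbol{p}_i$ of the shared CPT, so I can invoke the convexity bound of Lemma \ref{mixing_lemma}, taking both families of distributions to be the rows of $P$ with mixing weights $\boldsymbol{a}$ and $\boldsymbol{b}$, to obtain
\[
d_V(\boldsymbol{a}P,\boldsymbol{b}P) \leq \sum_{i}\sum_{i'} a_{i'}\, b_i \, d_V(\boldsymbol{p}_{i'},\boldsymbol{p}_i).
\]
Each pairwise row distance satisfies $d_V(\boldsymbol{p}_{i'},\boldsymbol{p}_i) \leq d^{+}(P)$ directly from the definition of the diameter, and the weights sum to one, so the double sum is bounded by $d^{+}(P)$. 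Substituting back yields $d_V(\boldsymbol{\rho}_1,\boldsymbol{\rho}_2) \leq M\, d^{+}(P) = d^{+}(P(\boldsymbol{Y|X}))\, d_V(\boldsymbol{\pi}_1,\boldsymbol{\pi}_2)$, as required.

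The step needing the most care is the Jordan-type decomposition and the bookkeeping linking $M$ to $d_V(\boldsymbol{\pi}_1,\boldsymbol{\pi}_2)$: the equality of the two total masses relies crucially on both margins being normalised, and the reduction to mixtures of rows is precisely what makes Lemma \ref{mixing_lemma} applicable. Everything after that is a one-line estimate. An essentially equivalent route would bound $\sum_j |\sum_i \mu_i p_{ij}|$ directly, subtracting an arbitrary reference row before taking absolute values and using $\sum_i \mu_i = 0$; but routing through the mixing lemma keeps the argument self-contained within the machinery already established.
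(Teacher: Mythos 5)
Your proof is correct and follows essentially the same route as the paper: your Jordan decomposition $\boldsymbol{a}=\boldsymbol{\mu}^{+}/M$, $\boldsymbol{b}=\boldsymbol{\mu}^{-}/M$ with $M=d_V(\boldsymbol{\pi}_1,\boldsymbol{\pi}_2)$ is exactly the paper's construction of $\boldsymbol{\pi}_1^{\ast},\boldsymbol{\pi}_2^{\ast}$ via the common component $\boldsymbol{\pi}_{1\wedge 2}$ with $1-\beta=d_V(\boldsymbol{\pi}_1,\boldsymbol{\pi}_2)$, and both arguments then finish by applying Lemma \ref{mixing_lemma} to the two mixtures of the rows of $P$ and bounding each pairwise row distance by $d^{+}(P)$.
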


\begin{proof}
See Appendix \ref{appendixA2}.
\end{proof}

This property will be exploited below in the study of BNs.  Note for example that if $P(\boldsymbol{Y|X})$ has been specified accurately, but that the margin\ $\boldsymbol{\pi }_1$ is uncertain, then our marginal beliefs about $\boldsymbol{Y}$ are no more uncertain than those about $\boldsymbol{X}$, because by definition $d^{+}(P(\boldsymbol{Y|X}))\leq 1$.  More importantly we have a bound on how much our uncertainty, quantified in terms of total variation, reduces in terms of $d^{+}(P(\boldsymbol{Y|X}))$ -- a measure of how far away $\boldsymbol{Y}$ is from independence of $\boldsymbol{X}$. 

\begin{example}
Let us once again look at the CPT of `Tree Condition', $P$, which had a binary parent `Drought' and a three-state parent `Rainfall'.  The joint distribution can be calculated from CPTs as $\boldsymbol{\pi}_1=(0.05,0.175,0.025,0.15,0.525,0.075)$.  Suppose another expert proposed he different probability vector $\boldsymbol{\pi}_2 =(0.05,0.275,0.03,0.15,0.4,0.095)$.  We have previously calculated $d^+(P)=0.7$ and can calculate that $d_V(\boldsymbol{\pi }_1,\boldsymbol{\pi }_2)=0.125$.  Therefore Theorem \ref{theorem_totvar_diameter} gives:
\begin{align*}
d_V(\boldsymbol{\rho }_1,\boldsymbol{\rho }_2) &\leq d^{+}(P(\boldsymbol{Y|X}))d_V(\boldsymbol{\pi }_1,\boldsymbol{\pi }_2) = 0.7 \times 0.125 = 0.0875
\end{align*}
However, we can of course calculate this margin exactly as $d_V(\boldsymbol{\rho}_1,\boldsymbol{\rho}_2) = 0.0555$.  However, is we knew only the extreme entries of $P$ then we could still calculate our bound which is of the right order of magnitude: a property we have found to be typical of the types of CPTs we habitually elicit. 
\end{example}

\subsubsection{A Global Bound Approximation}
There is another bound which applies when not only a margin $\boldsymbol{\pi }_1$ of $\boldsymbol{X}$ is perturbed to $\boldsymbol{\pi }_2$, but also the conditional mass functions of $\boldsymbol{Y|X}$ is simultaneously perturbed.  Occasionally we need variation bounds on the consequent perturbation on the margins $\boldsymbol{\rho }_1,\boldsymbol{\rho }_2$ of $\boldsymbol{Y}$:

\begin{definition}
Let the superbound, $d_V^{\ast }\left( P,Q\right)$, between stochastic matrices $P$ and $Q$ be defined by 
\begin{align*}
d_V^{\ast }\left( P,Q\right) \triangleq \max_{1\leq i,i^{\prime }\leq n}d_V(\boldsymbol{p}_i,\boldsymbol{q}_{i^{\prime }})\leq 1.
\end{align*}
\end{definition}

So here we compare variation distances between each row of $P$ and possibly different rows of $Q$ before selecting the largest difference. Note by definition and the triangle inequality that 
\begin{equation}
d_V^{+}\left( P,Q\right) \leq d_V^{\ast }\left( P,Q\right) \leq \max \left\{ d_V^{+}\left( P,Q\right) +\max \left\{ d(P),d(Q)\right\} ,1\right\}. \label{oldinequality}
\end{equation}

\begin{example}
Let us compare the two alternative CPTs, $P$ and $Q$, for the `Tree Condition' node as introduced in Example \ref{example_TreeCondition_CPT}.  The value of $d_V^*(P,Q)$ can be calculated directly from the total variation distance between every possible pairwise combination of rows in $P$ and $Q$.  For this example $d_V^*(P,Q)=0.7$ corresponding to $d_V(\boldsymbol{p}_1,\boldsymbol{q}_6)=d_V(\boldsymbol{p}_6,\boldsymbol{q}_1)$.
\end{example}

Let $P_{A|B}$, $Q_{A|B}$ represent respectively the conditional probability mass functions of $\boldsymbol{X}_A|$ $\boldsymbol{X}_B$ under the hypothesis and alternative given $\boldsymbol{X}_B$, where without loss we can assume that $A$ and $B$ are disjoint. Notice that these can be seen as CPTs whose rows correspond to the different values of the vector $\boldsymbol{x}_B$. Then under our definitions of transition matrices above whenever $\boldsymbol{X}_A\amalg \boldsymbol{X}_B$
\begin{align*}
d_V^{+}(P_{A|B},Q_{A|B})=d_V^{\ast
}(P_{A|B},Q_{A|B})=d_V(p_{A},q_{A}).
\end{align*}
This arises simply because $\boldsymbol{X}_A\amalg \boldsymbol{X}_B$ implies that all rows in the CPT matrix are equal to each other and so equal to the corresponding margin on $\boldsymbol{X}_A$.  Thus we see that standard analyses that elicit irrelevances or independences translate here into equations on variation distance.  We will see later that this enables us to study the implications of models where the embedded conditional independences are only approximately true. 

\begin{definition}
The \emph{stochastic variation matrix} $D^{+}(P)=\{D^{+}(i,j)\}_{1\leq i,j\leq n}$ is the $n\times n$ symmetric matrix whose entries are the variation distances between the different rows of the matrix $P$.
\end{definition}

We will later use this construction to draw out useful functions of the explanatory variables associated with a particular variable of focus.

Now note that we can write 
\begin{equation*}
\boldsymbol{\pi }_1 =(1-\beta )\boldsymbol{\pi }_1^* +\beta  \boldsymbol{\pi }_{1\wedge 2}, \hspace{1cm}
\boldsymbol{\pi }_2 =(1-\beta )\boldsymbol{\pi }_2^* +\beta  \boldsymbol{\pi }_{1\wedge 2},
\end{equation*}
where $(1-\beta )=d_V\left( \boldsymbol{\pi }_1,\boldsymbol{\pi }_2\right) $ and where without loss we can assume the mixing process is shared by the two mass functions, so points are drawn either from $\boldsymbol{\pi }_{1\wedge 2}$ or alternatively something drawn from either $\boldsymbol{\pi }_1^*$ or $\boldsymbol{\pi }_2^*$ (see Supplementary Material for more a more detailed construction).  Using the same argument as for when $P_1=P_2$ 
\begin{eqnarray*}
d_V\left( \boldsymbol{\rho }_1,\boldsymbol{\rho }_2\right)
&=&d_V\left( \boldsymbol{\pi }_1 P_1,\boldsymbol{\pi }_2 P_2 \right)
\\
&=&d_V\left( \left( (1-\beta )\boldsymbol{\pi }_1^* +\beta \boldsymbol{\pi }_{1\wedge 2}\right) P_1,\left( (1-\beta )\boldsymbol{\pi }_2^* +\beta \boldsymbol{\pi }_{1\wedge 2}\right) P_2\right) \\
&\leq &\beta d_V\left( \boldsymbol{\pi }_{1\wedge 2}P_1,\boldsymbol{\pi }_{1\wedge 2}P_2\right) +(1-\beta )d_V\left( \boldsymbol{\pi }_1^* P_1,\boldsymbol{\pi }_2^* P_2\right) \\
&\leq &\beta d_V^{+}\left( P_1,P_2\right) +(1-\beta )d_V^{\ast
}\left( P_1,P_2\right).
\end{eqnarray*}
We can then show
\begin{equation*}
d_V\left( \boldsymbol{\rho }_1,\boldsymbol{\rho }_2\right) \leq
d_V^{+}\left( P_1,P_2\right) +d_V\left( \boldsymbol{\pi }_1,\boldsymbol{\pi }_2\right) d_V^{\ast }\left( P_1,P_2\right).
\label{First iteration1}
\end{equation*}
Using Equation \ref{oldinequality} in particular we have that 
\begin{equation*}
d_V\left( \boldsymbol{\rho }_1,\boldsymbol{\rho }_2\right) \leq
\left\{ 1+d_V\left( \boldsymbol{\pi }_1,\boldsymbol{\pi }_2\right) \right\} d_V^{+}\left( P_1,P_2\right) +d_V\left( \boldsymbol{\pi }_1,\boldsymbol{\pi }_2\right) \max \left\{ d(P_1),d(P_2)\right\}.
\label{first iteration 2}
\end{equation*}

\section{Approximations of the CPTs in a known BN}
\label{approximations_section}
Suppose all clients are content that the conditional independences in a given BN are valid.  Without changing the random variables in the system we are now interested in finding ways of approximating the graphical model and refining initial probability estimates within this given BN.

\subsection{Diameter Bounds when Marginalising or Conditioning}
We now present some basic results about diameters of the transition matrices between two vectors of random variables under various marginalisations and conditioning of the subvectors.  These bounds are particularly helpful when moving from a BN to a junction tree.

Let $\boldsymbol{X}=\left( \boldsymbol{X}_{1},\boldsymbol{X}_{2}\right) ,$ $\boldsymbol{Y}=\left( \boldsymbol{Y}_{1},\boldsymbol{Y}_{2}\right) $ and $P_{\boldsymbol{Y|X}}$ $(P_{\boldsymbol{Y|X}_{1}})$ be, respectively, the transition matrix associated with the conditional distribution of $\boldsymbol{Y|X}$ (the same conditional distribution $\boldsymbol{Y|X}_{1}$ but now with $\boldsymbol{X}_{2}$ marginalised out).  Let $d^+ (P_{\boldsymbol{Y|X}}),(d^+ (P_{\boldsymbol{Y|X}_{1}}))$ denote their respective diameters.

\begin{lemma}
\label{diameter_marginalisation}
\begin{equation*}
d^+ (P_{\boldsymbol{Y|X}_{1}})\leq d^+ (P_{\boldsymbol{Y|X}}).
\end{equation*}
\end{lemma}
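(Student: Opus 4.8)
The plan is to exploit the fact that marginalising out $\boldsymbol{X}_2$ replaces each row of $P_{\boldsymbol{Y|X}_1}$ by a convex mixture of rows of $P_{\boldsymbol{Y|X}}$, and then to control the variation distance between any two such mixtures using the convexity result of Lemma \ref{mixing_lemma}. Intuitively, marginalisation is just an averaging operation, and averaging the rows together cannot push them further apart than the widest pair of original rows; the diameter is exactly a measure of that widest spread.

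First I would fix a value $\boldsymbol{x}_1$ and write the corresponding row of $P_{\boldsymbol{Y|X}_1}$ explicitly. By the law of total probability,
\[
p(\boldsymbol{y}\mid \boldsymbol{x}_1)=\sum_{\boldsymbol{x}_2} p(\boldsymbol{x}_2\mid \boldsymbol{x}_1)\,p(\boldsymbol{y}\mid \boldsymbol{x}_1,\boldsymbol{x}_2),
\]
so, writing $\boldsymbol{p}_{(\boldsymbol{x}_1,\boldsymbol{x}_2)}$ for the row of $P_{\boldsymbol{Y|X}}$ indexed by $(\boldsymbol{x}_1,\boldsymbol{x}_2)$, the $\boldsymbol{x}_1$-row of $P_{\boldsymbol{Y|X}_1}$ equals $\sum_{\boldsymbol{x}_2}\alpha_{\boldsymbol{x}_2}\,\boldsymbol{p}_{(\boldsymbol{x}_1,\boldsymbol{x}_2)}$ with nonnegative weights $\alpha_{\boldsymbol{x}_2}=p(\boldsymbol{x}_2\mid\boldsymbol{x}_1)$ that sum to $1$. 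The essential observation is that each marginalised row lies in the convex hull of those rows of the unmarginalised CPT that share the same $\boldsymbol{x}_1$.

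Next I would take any two values $\boldsymbol{x}_1,\boldsymbol{x}_1'$, express the two associated rows as mixtures with weight vectors $\boldsymbol{\alpha}$ and $\boldsymbol{\beta}$, and apply Lemma \ref{mixing_lemma} to obtain
\[
d_V(\boldsymbol{p}_{\boldsymbol{x}_1},\boldsymbol{p}_{\boldsymbol{x}_1'})\leq \sum_{\boldsymbol{x}_2}\sum_{\boldsymbol{x}_2'}\alpha_{\boldsymbol{x}_2}\beta_{\boldsymbol{x}_2'}\,d_V\!\left(\boldsymbol{p}_{(\boldsymbol{x}_1,\boldsymbol{x}_2)},\boldsymbol{p}_{(\boldsymbol{x}_1',\boldsymbol{x}_2')}\right).
\]
Every distance on the right is the variation distance between two genuine rows of $P_{\boldsymbol{Y|X}}$ and is therefore at most $d^{+}(P_{\boldsymbol{Y|X}})$. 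Since $\sum_{\boldsymbol{x}_2}\alpha_{\boldsymbol{x}_2}=\sum_{\boldsymbol{x}_2'}\beta_{\boldsymbol{x}_2'}=1$, the double sum of the weights collapses to $1$, leaving $d_V(\boldsymbol{p}_{\boldsymbol{x}_1},\boldsymbol{p}_{\boldsymbol{x}_1'})\leq d^{+}(P_{\boldsymbol{Y|X}})$. Taking the maximum over $\boldsymbol{x}_1,\boldsymbol{x}_1'$ on the left then yields $d^{+}(P_{\boldsymbol{Y|X}_1})\leq d^{+}(P_{\boldsymbol{Y|X}})$, as required.

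I do not anticipate a genuine obstacle, as the argument is little more than ``marginalisation is averaging.'' The one point that needs care is the bookkeeping of the mixing weights: the weights $p(\boldsymbol{x}_2\mid\boldsymbol{x}_1)$ and $p(\boldsymbol{x}_2'\mid\boldsymbol{x}_1')$ legitimately differ across the two rows, since they depend on $\boldsymbol{x}_1$ and $\boldsymbol{x}_1'$ respectively, yet each is a bona fide probability vector summing to one, and it is precisely this fact that makes the double-sum bound telescope to the single diameter $d^{+}(P_{\boldsymbol{Y|X}})$. I would also remark that the same argument, with the maxima restricted to indices lying in a set $I$, gives the analogous inequality for the $I$-local diameter $d^{I+}$.
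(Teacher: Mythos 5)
Your proof is correct and follows essentially the same route as the paper: the paper's own (one-line) argument likewise observes that each row of $P_{\boldsymbol{Y|X}_1}$ is a weighted average of rows of $P_{\boldsymbol{Y|X}}$ with weights $p(\boldsymbol{x}_2\mid\boldsymbol{x}_1)$, and your write-up simply makes explicit the appeal to Lemma \ref{mixing_lemma} and the collapse of the double sum that the paper leaves implicit.
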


\begin{proof}
This is immediate since each of the rows of $P_{\boldsymbol{Y|X}_{1}}$ is a weighted average (the weights on row labelled $\boldsymbol{x}_{1} $ corresponding to the masses on $\boldsymbol{X}_{2}|\boldsymbol{X}_{1}=\boldsymbol{x}_{1}$).
\end{proof} 

Note that this bound is tight in the sense that it is attained for a particular distribution on $\boldsymbol{X}_{2}|\boldsymbol{X}_{1}=
\boldsymbol{x}_{1}$.  Suppose $d^+(P_{\boldsymbol{Y|X}})$ is attained when we compare the row $\left( \boldsymbol{x}_{1},\boldsymbol{x}_{2}\right) $ with $\left( \boldsymbol{x}_{1}^{\prime },\boldsymbol{x}_{2}^{\prime }\right) $ and 
\begin{equation*}
P\left( \boldsymbol{X}_{2}=\boldsymbol{x}_{2}|\boldsymbol{X}_{1}=\boldsymbol{x}_{1}\right) =1\text{ and }P\left( \boldsymbol{X}_{2}=\boldsymbol{x}_{2}^{\prime }|\boldsymbol{X}_{1}=\boldsymbol{x}_{1}^{\prime }\right) =1,
\end{equation*}
then it is easy to check that $d^+(P_{\boldsymbol{Y|X}_{1}})\leq d^+(P_{\boldsymbol{Y|X}})$.

\begin{lemma}
\label{lemma14}
Using the obvious notation, for any two joint probability mass functions $p_{\boldsymbol{X,Y}}(\boldsymbol{x},\boldsymbol{y}), p_{\boldsymbol{X,Y}
}^{\prime }(\boldsymbol{x},\boldsymbol{y})$ over $\boldsymbol{X,Y}$ 
\begin{equation*}
d_{V}(p_{\boldsymbol{X,Y}}(\boldsymbol{x},\boldsymbol{y}),p_{\boldsymbol{X,Y}}^{\prime }(\boldsymbol{x},\boldsymbol{y}))\leq \inf \left\{ d_{V}(p_{\boldsymbol{X}}(\boldsymbol{x}),p_{\boldsymbol{X}}^{\prime }(\boldsymbol{x}))+\sup_{\boldsymbol{x}}d_{V}(p_{\boldsymbol{Y}|\boldsymbol{X}}(\boldsymbol{y
}|\boldsymbol{x}),p_{\boldsymbol{Y}|\boldsymbol{X}}^{\prime }(\boldsymbol{y}|\boldsymbol{x})),1\right\}.
\end{equation*}
\end{lemma}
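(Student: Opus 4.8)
The plan is to work directly from the definition of the total variation distance and exploit the chain-rule factorisation of each joint mass function, writing $p_{\boldsymbol{X,Y}}(\boldsymbol{x},\boldsymbol{y})=p_{\boldsymbol{X}}(\boldsymbol{x})\,p_{\boldsymbol{Y}|\boldsymbol{X}}(\boldsymbol{y}|\boldsymbol{x})$ and likewise $p_{\boldsymbol{X,Y}}^{\prime}(\boldsymbol{x},\boldsymbol{y})=p_{\boldsymbol{X}}^{\prime}(\boldsymbol{x})\,p_{\boldsymbol{Y}|\boldsymbol{X}}^{\prime}(\boldsymbol{y}|\boldsymbol{x})$; under the standing strict-positivity convention these conditionals are everywhere well defined. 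First I would expand
\[
d_V(p_{\boldsymbol{X,Y}},p_{\boldsymbol{X,Y}}^{\prime})=\frac{1}{2}\sum_{\boldsymbol{x},\boldsymbol{y}}\bigl|p_{\boldsymbol{X}}(\boldsymbol{x})\,p_{\boldsymbol{Y}|\boldsymbol{X}}(\boldsymbol{y}|\boldsymbol{x})-p_{\boldsymbol{X}}^{\prime}(\boldsymbol{x})\,p_{\boldsymbol{Y}|\boldsymbol{X}}^{\prime}(\boldsymbol{y}|\boldsymbol{x})\bigr|,
\]
and then insert and subtract the hybrid term $p_{\boldsymbol{X}}^{\prime}(\boldsymbol{x})\,p_{\boldsymbol{Y}|\boldsymbol{X}}(\boldsymbol{y}|\boldsymbol{x})$ inside the absolute value. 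The triangle inequality splits each summand into $|p_{\boldsymbol{X}}(\boldsymbol{x})-p_{\boldsymbol{X}}^{\prime}(\boldsymbol{x})|\,p_{\boldsymbol{Y}|\boldsymbol{X}}(\boldsymbol{y}|\boldsymbol{x})$ plus $p_{\boldsymbol{X}}^{\prime}(\boldsymbol{x})\,|p_{\boldsymbol{Y}|\boldsymbol{X}}(\boldsymbol{y}|\boldsymbol{x})-p_{\boldsymbol{Y}|\boldsymbol{X}}^{\prime}(\boldsymbol{y}|\boldsymbol{x})|$.

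The crux is then to collapse the two resulting double sums. In the first, summing the conditional over $\boldsymbol{y}$ gives $1$ for each fixed $\boldsymbol{x}$, leaving exactly $\frac{1}{2}\sum_{\boldsymbol{x}}|p_{\boldsymbol{X}}(\boldsymbol{x})-p_{\boldsymbol{X}}^{\prime}(\boldsymbol{x})|=d_V(p_{\boldsymbol{X}},p_{\boldsymbol{X}}^{\prime})$. In the second, I would factor out $p_{\boldsymbol{X}}^{\prime}(\boldsymbol{x})$ and recognise the inner half-sum over $\boldsymbol{y}$ as $d_V(p_{\boldsymbol{Y}|\boldsymbol{X}}(\cdot|\boldsymbol{x}),p_{\boldsymbol{Y}|\boldsymbol{X}}^{\prime}(\cdot|\boldsymbol{x}))$, so the term becomes a $p_{\boldsymbol{X}}^{\prime}$-weighted average of conditional variation distances. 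Since these weights are nonnegative and sum to one, the average is at most $\sup_{\boldsymbol{x}}d_V(p_{\boldsymbol{Y}|\boldsymbol{X}}(\cdot|\boldsymbol{x}),p_{\boldsymbol{Y}|\boldsymbol{X}}^{\prime}(\cdot|\boldsymbol{x}))$. Adding the two pieces yields the bound $d_V(p_{\boldsymbol{X}},p_{\boldsymbol{X}}^{\prime})+\sup_{\boldsymbol{x}}d_V(p_{\boldsymbol{Y}|\boldsymbol{X}},p_{\boldsymbol{Y}|\boldsymbol{X}}^{\prime})$. Finally, since any pair of mass functions satisfies $d_V\leq 1$, the left-hand side is trivially also bounded by $1$, and taking the smaller of the two available bounds produces the stated $\inf$.

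The computation is essentially routine, so the main thing to get right is the bookkeeping at the splitting step: the hybrid term must be chosen so that one summand isolates the marginal discrepancy (with the conditional summing away to $1$) while the other isolates the conditional discrepancy (with $p_{\boldsymbol{X}}^{\prime}$ supplying the mixing weights). The only genuine subtlety is well-definedness of the conditionals where a marginal could vanish; this is dispatched by the positivity convention already adopted, or equivalently by noting that wherever $p_{\boldsymbol{X}}(\boldsymbol{x})=0$ the joint contribution vanishes and the conditional there may be assigned arbitrarily. I would also remark that running the identical argument with the roles of $\boldsymbol{X}$ and $\boldsymbol{Y}$ interchanged gives the companion bound conditioning on $\boldsymbol{Y}$, although only the form stated here is needed in what follows.
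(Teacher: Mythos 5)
Your proposal is correct and follows essentially the same route as the paper's own proof: both insert the hybrid term $p_{\boldsymbol{X}}^{\prime}(\boldsymbol{x})\,p_{\boldsymbol{Y}|\boldsymbol{X}}(\boldsymbol{y}|\boldsymbol{x})$, apply the triangle inequality, collapse the first double sum to the marginal variation distance, bound the second as a $p_{\boldsymbol{X}}^{\prime}$-weighted average of conditional variation distances by its supremum, and finish with the trivial bound $d_V\leq 1$. No substantive difference.
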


\begin{proof}
See Appendix \ref{appendixB1}.
\end{proof}

Finally, we can determine a bound on the diameter of a CPT in which many variables are dependent on the same set.  This will often be the case when we are looking at a simple path of a junction tree in which a separator contains more than one variable:

\begin{lemma}
\label{diameter_nonsingle_separator}
$d^+(P_{\boldsymbol{Y|X}})\leq \inf \left\{ d^+(P_{\boldsymbol{Y}_{1}\boldsymbol{|X}})+ d^+(P_{\boldsymbol{Y}_{2}\boldsymbol{|X,Y}_{1}}),1\right\}. $
\end{lemma}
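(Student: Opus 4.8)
The plan is to reduce the left-hand diameter to a comparison of two individual rows of $P_{\boldsymbol{Y|X}}$ and then apply Lemma \ref{lemma14} row-by-row, using the chain factorisation $p(\boldsymbol{y}|\boldsymbol{x})=p(\boldsymbol{y}_1|\boldsymbol{x})\,p(\boldsymbol{y}_2|\boldsymbol{x},\boldsymbol{y}_1)$ to split the variation distance. By definition $d^+(P_{\boldsymbol{Y|X}})=\max_{\boldsymbol{x},\boldsymbol{x}'}d_V(p(\cdot|\boldsymbol{x}),p(\cdot|\boldsymbol{x}'))$, where for each $\boldsymbol{x}$ the row $p(\cdot|\boldsymbol{x})$ is a joint mass function on $(\boldsymbol{Y}_1,\boldsymbol{Y}_2)$. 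It therefore suffices to bound $d_V(p(\cdot|\boldsymbol{x}),p(\cdot|\boldsymbol{x}'))$ uniformly in the pair $\boldsymbol{x},\boldsymbol{x}'$ by a quantity free of $\boldsymbol{x},\boldsymbol{x}'$, after which the outer maximum is automatic.

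First I would fix a pair of rows $\boldsymbol{x},\boldsymbol{x}'$ and invoke Lemma \ref{lemma14} with the two joint mass functions taken to be $p(\cdot|\boldsymbol{x})$ and $p(\cdot|\boldsymbol{x}')$, the role of the conditioning variable $\boldsymbol{X}$ in that lemma played by $\boldsymbol{Y}_1$ and the role of $\boldsymbol{Y}$ played by $\boldsymbol{Y}_2$. This yields
\[
d_V(p(\cdot|\boldsymbol{x}),p(\cdot|\boldsymbol{x}')) \leq \inf\Big\{ d_V(p(\boldsymbol{y}_1|\boldsymbol{x}),p(\boldsymbol{y}_1|\boldsymbol{x}')) + \sup_{\boldsymbol{y}_1} d_V(p(\boldsymbol{y}_2|\boldsymbol{x},\boldsymbol{y}_1),p(\boldsymbol{y}_2|\boldsymbol{x}',\boldsymbol{y}_1)),\,1 \Big\}.
\]

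Next I would identify each inner term with a distance between rows of the relevant CPT. The term $d_V(p(\boldsymbol{y}_1|\boldsymbol{x}),p(\boldsymbol{y}_1|\boldsymbol{x}'))$ is the variation distance between the rows of $P_{\boldsymbol{Y}_1|\boldsymbol{X}}$ indexed by $\boldsymbol{x}$ and $\boldsymbol{x}'$, so it is at most $d^+(P_{\boldsymbol{Y}_1|\boldsymbol{X}})$. For each fixed $\boldsymbol{y}_1$, the quantity $d_V(p(\boldsymbol{y}_2|\boldsymbol{x},\boldsymbol{y}_1),p(\boldsymbol{y}_2|\boldsymbol{x}',\boldsymbol{y}_1))$ is the variation distance between the rows of $P_{\boldsymbol{Y}_2|\boldsymbol{X},\boldsymbol{Y}_1}$ indexed by $(\boldsymbol{x},\boldsymbol{y}_1)$ and $(\boldsymbol{x}',\boldsymbol{y}_1)$; since $d^+(P_{\boldsymbol{Y}_2|\boldsymbol{X},\boldsymbol{Y}_1})$ maximises over \emph{all} pairs of rows (and these pairs share the same $\boldsymbol{y}_1$), the supremum over $\boldsymbol{y}_1$ is bounded by $d^+(P_{\boldsymbol{Y}_2|\boldsymbol{X},\boldsymbol{Y}_1})$. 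Using that $\inf\{A+B,1\}\leq\inf\{A'+B',1\}$ whenever $A\leq A'$ and $B\leq B'$, the per-row bound collapses to $\inf\{d^+(P_{\boldsymbol{Y}_1|\boldsymbol{X}})+d^+(P_{\boldsymbol{Y}_2|\boldsymbol{X},\boldsymbol{Y}_1}),1\}$, which no longer depends on $\boldsymbol{x},\boldsymbol{x}'$; taking the maximum over $\boldsymbol{x},\boldsymbol{x}'$ then gives the stated inequality.

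The only genuinely delicate point is this last bounding step. The supremum supplied by Lemma \ref{lemma14} ranges over $\boldsymbol{y}_1$, comparing rows of $P_{\boldsymbol{Y}_2|\boldsymbol{X},\boldsymbol{Y}_1}$ that agree in their $\boldsymbol{y}_1$-coordinate but differ in $\boldsymbol{X}$, whereas the diameter maximises over arbitrary pairs of rows. One must verify that restricting to matched-$\boldsymbol{y}_1$ pairs can only shrink the value, so the diameter remains a valid upper bound; this is immediate. The preservation of the outer $\inf$ with $1$ through the final maximisation is also immediate, since the $\min$-with-$1$ is monotone in its first argument and the interior bound is constant in the row indices. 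Beyond these observations the argument is a direct application of Lemma \ref{lemma14} with the correct role assignment, so no substantive obstacle remains.
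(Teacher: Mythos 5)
Your proposal is correct and follows essentially the same route as the paper's own proof: both apply Lemma \ref{lemma14} to the pair of rows $p(\cdot|\boldsymbol{x})$, $p(\cdot|\boldsymbol{x}')$ with $\boldsymbol{Y}_1$ playing the role of the conditioning variable and $\boldsymbol{Y}_2$ the conditioned one, then bound each resulting term by the corresponding diameter and take the maximum over $\boldsymbol{x},\boldsymbol{x}'$. Your explicit check that the matched-$\boldsymbol{y}_1$ supremum is dominated by the unrestricted diameter is a point the paper passes over silently, but it is the same argument.
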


\begin{proof}
See Appendix \ref{appendixB2}.
\end{proof}

These results may seem trivial, however they enable us to bound the diameters of CPTs in our junction tree path, using the diameters already calculated from the original CPTs in the BN.  This enables us to study the robustness to misspecification without calculating any new information.

\subsection{Diminishing tree propagated approximation error}
The following result explains why when using standard propagation algorithms on updating one of the clique margins $C_1$, the knock on effect on the other clique margins becomes weaker and weaker as the updated cliques become progressively more remote from $C_1$ - a property \cite{albrecht_nicholson_whittle} exploit in their work.  Furthermore the extent of the deviation can be measured, in the sense that it can be bounded above. This enables us to bound the potential extent of error in the distributions of focus variables induced from the misspecification of structure or various CPTs in the BN.  This is particularly useful when we elicit a large BN and want to know how far away from target nodes we need to elicit the corresponding CPTs accurately. 

\begin{theorem}\label{clique_thm}
Let $\left( C_1,C_2,C_3,\ldots ,C_k\right)$, from $C_1$ to $C_k$, be the minimal sequence of cliques with associated separators $\left( S_2,S_3,\ldots,S_k\right)$. Let each undirected edge of the marginalised junction tree be denoted by $\delta_i$ for $i=1,2,\hdots,k$; the diameter of the conditional probability table between the two sequential nodes, for example $\delta_1=d^+(P(S_2|C_1)), \delta_2=d^+(P(S_3|S_2)), \hdots, \delta_k=d^+(P(C_k|S_k))$. Then
\begin{align*}
d_V(p_{C_k}(\boldsymbol{x}_{C_k}),q_{C_k}(\boldsymbol{x}_{C_k}))\leq d_V(p_{C_1}(\boldsymbol{x}_{C_1}),q_{C_1}(\boldsymbol{x}_{C_1}))  \prod\limits_{i=1}^{k} \delta _i.
\end{align*}
\end{theorem}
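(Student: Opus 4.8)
The plan is to recognise the target margin $p_{C_k}$ as the image of the donating margin $p_{C_1}$ under a composition of stochastic transition matrices supplied by Lemma \ref{junction tree altered}, and then to apply Theorem \ref{theorem_totvar_diameter} once per link of the simple path $C_1 - S_2 - S_3 - \cdots - S_k - C_k$, so that the single-step contraction factors $\delta_i$ simply multiply. First I would fix the perturbation regime implicit in the statement: $p$ and $q$ are two joint mass functions respecting the same BN $\mathcal{G}$, agreeing on every CPT met along the path and differing only in the margin placed on $C_1$. Under this assumption the transition matrices $P(S_2|C_1), P(S_3|S_2), \ldots, P(S_k|S_{k-1}), P(C_k|S_k)$ are common to both $p$ and $q$, which is exactly the ``shared CPT'' hypothesis required by Theorem \ref{theorem_totvar_diameter}.

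Next I would start from the formula of Lemma \ref{junction tree altered} and sum it over $\boldsymbol{x}_{C_1}$. Since $S_2 \subseteq C_1$, this marginalisation collapses $p(\boldsymbol{x}_{C_1})$ to $p(\boldsymbol{x}_{S_2})$ and yields
\begin{equation*}
p(\boldsymbol{x}_{C_k}) = \sum_{\boldsymbol{x}_{S_2}, \ldots, \boldsymbol{x}_{S_{k-1}}} p(\boldsymbol{x}_{S_2})\, p(\boldsymbol{x}_{S_3}|\boldsymbol{x}_{S_2}) \cdots p(\boldsymbol{x}_{S_k}|\boldsymbol{x}_{S_{k-1}})\, p(\boldsymbol{x}_{C_k}|\boldsymbol{x}_{S_k}),
\end{equation*}
together with the identical identity for $q$. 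This exhibits $p_{C_k}$ as the result of pushing $p_{C_1}$ first through the marginalisation $P(S_2|C_1)$ to get $p_{S_2}$, then through the successive kernels $P(S_{i+1}|S_i)$ to get the intermediate margins $p_{S_3}, \ldots, p_{S_k}$, and finally through $P(C_k|S_k)$; the same chain produces $q_{C_k}$ from $q_{C_1}$.

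The contraction step is then a routine induction down the path. Applying Theorem \ref{theorem_totvar_diameter} to the first link gives $d_V(p_{S_2}, q_{S_2}) \leq \delta_1\, d_V(p_{C_1}, q_{C_1})$; applying it to $P(S_3|S_2)$ gives $d_V(p_{S_3}, q_{S_3}) \leq \delta_2\, d_V(p_{S_2}, q_{S_2})$; and so on, with the final link $P(C_k|S_k)$ contributing the factor $\delta_k$ via $d_V(p_{C_k}, q_{C_k}) \leq \delta_k\, d_V(p_{S_k}, q_{S_k})$. Composing these $k$ inequalities telescopes to $d_V(p_{C_k}, q_{C_k}) \leq d_V(p_{C_1}, q_{C_1}) \prod_{i=1}^{k} \delta_i$, which is the claimed bound; no capping at $1$ is needed since each $\delta_i \leq 1$ by definition of the diameter.

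The main obstacle is conceptual rather than computational: one must justify that the intermediate kernels $P(S_{i+1}|S_i)$ are genuinely shared by $p$ and $q$ and are left untouched by the perturbation at $C_1$. This rests on the Markov structure of the junction tree made explicit in Lemma \ref{junction tree altered} — each separator $S_i$ separates the downstream cliques from the donating clique $C_1$ — so that a perturbation localised to $C_1$ leaves every downstream conditional intact. A secondary point needing care is the asymmetric treatment of the two end links: because $S_2 \subseteq C_1$ and $S_k \subseteq C_k$, the map $P(S_2|C_1)$ is a deterministic marginalisation while $P(C_k|S_k)$ is a genuine stochastic transition, but Theorem \ref{theorem_totvar_diameter} applies verbatim to both (in the marginalisation case simply recording the harmless bound $\delta_1 \leq 1$), so they require no separate argument.
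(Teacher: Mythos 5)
Your proposal is correct and follows essentially the same route as the paper: invoke Lemma \ref{junction tree altered} to reduce the simple path to a non-homogeneous Markov chain $C_1 - S_2 - \cdots - S_k - C_k$, then apply Theorem \ref{theorem_totvar_diameter} link by link so the diameters $\delta_i$ multiply. Your added remarks on why the intermediate kernels are shared by $p$ and $q$ and on the deterministic nature of the first link make explicit what the paper leaves implicit, but do not change the argument.
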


\begin{proof}
By Lemma 2.1 we can rewrite our junction tree to marginalise over internal cliques leaving us with the graphical structure:
\begin{figure}[H]
\begin{center}
\scalebox{0.75}{
\begin{tikzpicture}[-,shorten >=1pt,auto,node distance=2cm, thick]
\tikzstyle{state}=[fill=white,draw,text=black,shape=circle] 
\tikzstyle{sq.state}=[fill=white,draw,text=black,shape=rectangle] 
\node[state](C1)  [] {$C_1$};
\node[sq.state](S2)  [right=1cm of C1]{$S_2$};
\node[sq.state](S3)  [right=1cm of S2]{$S_3$};
\node[state](invisible)  [right=1cm of S3,draw=white] {...};
\node[sq.state](Sk)  [right=1cm of invisible]{$S_k$};
\node[state](Ck)  [right=1cm of Sk] {$C_k$};
\path (C1) edge (S2);
\path (S2) edge (S3);
\path (S3) edge (invisible);
\path (invisible) edge (Sk);
\path (Sk) edge (Ck);
(
\end{tikzpicture}
}
\end{center}
\end{figure}
Let each undirected edge be denoted by $\delta_i$ for $i=1,2,\hdots,k$; the diameter of the conditional probability table between the two sequential nodes. Giving $\delta_1=d^+(P(S_2|C_1)), \delta_2=d^+(P(S_3|S_2)), \hdots, \delta_k=d^+(P(C_k|S_k))$.
By successive application of Theorem 3.7:
\begin{align*}
d_V(p_{C_k}(\boldsymbol{x}_{C_k})&,q_{C_k}(\boldsymbol{x}_{C_k})) \leq d^+(P(C_k|S_k))d_V(p_{S_k}(\boldsymbol{x}_{S_k}),q_{S_k}(\boldsymbol{x}_{S_k})) \\
&\leq d^+(P(C_k|S_k)) d^+(P(S_k|S_{k-1})) d_V(p_{S_{k-1}}(\boldsymbol{x}_{S_{k-1}}),q_{S_{k-1}}(\boldsymbol{x}_{S_{k-1}})) \\
&\leq d^+(P(C_k|S_k)) d^+(P(S_k|S_{k-1})) \hdots d^+(P(S_3|S_2)) d^+(P(S_2|C_1)) d_V(p_{C_1}(\boldsymbol{x}_{C_1}),q_{C_1}(\boldsymbol{x}_{C_1})) \\
&= \left( \prod_{i=1}^{k} \delta_i \right) d_V(p_{C_1}(\boldsymbol{x}_{C_1}),q_{C_1}(\boldsymbol{x}_{C_1}))
\end{align*}
\end{proof}
Next we define the impact of one clique upon another in order to ascertain the diminishing effect of errors downstream in the causal chain.

\begin{definition}
Define the \emph{impact }$I(C_k|C_1)$ of $C_1$ on $C_k$ to be $\prod\limits_{i=1}^{k}\delta _{i}$.
\end{definition}

The impact of one clique on another is a simple measure of the maximum possible influence the misspecification of one set of clique probabilities could have on another as measured by a bound on the variation distance.  Note that in general we can label the edges of a junction tree (which are also labelled by a separator between adjacent cliques) $C_i$ and $C_j$ by two diameters $\delta _{i\rightarrow j}$ and $\delta _{j\rightarrow i}$ one measuring the impact of $i$ on $j$ and the other the impact of $j$ on $i$.  Note that these two impacts are not necessarily equal, and are often very different.  However, in the contexts we consider here (where our primary interest concerns the robustness of the margins of an identified subset of attributes) we usually need to focus on propagation in a single direction.  Furthermore, if the BN is constructed consistently with a conjectured causal directionality in mind, then this directionality often tends to have the attributes at the end of the causal chain.  This means that the diameters we need can often be calculated directly from the diameter of the elicited CPTs of the BN.  

\begin{example}
The two simple BNs we have used in our running example are not deep enough to illustrate the usefulness of this result, whilst the full IDSS is far too complicated. So instead we use here a simplification of a BN used to model radicalisation processes one of the authors has elicited, where the precise meaning of the nodes is confidential but not relevant to the points we mean to illustrate.
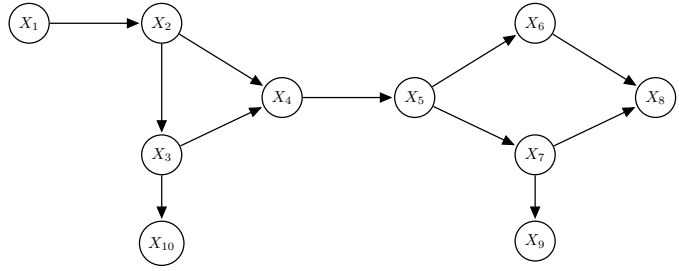
\begin{figure}[H]
\begin{center}
\scalebox{0.6}{
\begin{tikzpicture}[->,>=triangle 45,shorten >=1pt,auto,node distance=2cm, thick]
\tikzstyle{every state}=[fill=white,draw,text=black, shape=circle, rounded corners] 
\node[state](X1)  [draw=black,fill=white, text=black,] {$X_1$};
\node[state](X2)  [right=2cm of X1, draw=black,fill=white]{$X_2$};
\node[state](X3)  [below=2cm of X2, draw=black,fill=white,  text=black]{$X_3$};
\node[state](X4)  [below right=1cm and 2cm of X2, draw=black,fill=white,  text=black]{$X_4$};
\node[state](X5)  [right=2cm of X4, draw=black,fill=white]{$X_5$};
\node[state](X6)  [above right=1cm and 2cm of X5, draw=black,fill=white]{$X_6$};
\node[state](X7)  [below= 2cm of X6, draw=black,fill=white]{$X_7$};
\node[state](X8)  [below right=1cm and 2cm of X6, draw=black,fill=white]{$X_8$};
\node[state](X9)  [below=1cm of X7, draw=black,fill=white,  text=black]{$X_9$};
\node[state](X10)  [below=1cm of X3, draw=black,fill=white,  text=black]{$X_{10}$};
\path (X1) edge (X2);
\path (X3) edge (X4);
\path (X2) edge (X4);
\path (X2) edge (X3);
\path (X3) edge (X10);
\path (X4) edge (X5);
\path (X5) edge (X6);
\path (X5) edge (X7);
\path (X7) edge (X9);
\path (X6) edge (X8);
\path (X7) edge (X8);
(
\end{tikzpicture}
}
\caption{BN example to determine Impact of cliques.}
\label{JT_example}
\end{center}
\end{figure}

Let us label the cliques to satisfy the running intersection property:
\begin{align*}
\begin{matrix}
C_1 = \{ X_1, X_2 \}, \hspace{0.5cm} C_2 = \{ X_2, X_3, X_4 \}, \hspace{0.5cm} C_3 = \{ X_4, X_5\}, \\
C_4 = \{ X_5, X_6, X_7 \}, \hspace{0.5cm} C_5 =\{ X_6, X_7, X_8 \}, \hspace{0.5cm} C_6=\{ X_3, X_{10} \}, \hspace{0.5cm} C_7 = \{ X_7, X_9 \}
\end{matrix}
\end{align*}
Giving us separators:
\begin{align*}
\begin{matrix}
S_2 = \{ X_2 \}, \hspace{0.5cm} S_3 = \{ X_4 \}, \hspace{0.5cm} S_4 = \{ X_5 \}, \hspace{0.5cm} S_5 = \{ X_6, X_7\} \\
S_6 = C_2 \cap C_6 = \{ X_3 \}, \hspace{0.5cm} S_7 = C_5 \cap C_7 = \{ X_7 \}
\end{matrix}
\end{align*}

Suppose we wish to determine the effect on $X_9$ if we perturb $X_1$. Draw the ancestral graph of nodes $X_1$ and $X_9$, derive the impact formula (which is simply the product of diameters of each separator conditional on the previous previous separators):
\begin{align*}
I(X_9|C_1)&=p(X_2|X_1)p(X_4|X_2)p(X_5|X_4)p(X_7|X_5)p(X_9|X_7) \\
&\leq d^+(X_2)d^+(X_4)d^+(X_5)d^+(X_7)d^+(X_9) 
\end{align*}

Extending this further, we can determine the impact on cliques $X_6$ and $X_7$ simultaneously, if we perturb both $X_1$ and $X_2$. Following the same steps of creating cliques and separators for the ancestral graph of these nodes, the impact is given as:
\begin{align*}
I(X_6, X_7|X_1, X_2)&=p(X_2|X_1)p(X_4|X_2)p(X_5|X_4)p(X_6,X_7|X_5).
\end{align*}
This can be written in terms of the original BN CPTs using Lemma \ref{diameter_nonsingle_separator}, as some separators contain more than one node:
\begin{align*}
I(X_6, X_7|X_1, X_2)&\leq d^+(X_2) d^+(X_4) d^+(X_5) \bigg[ \inf \{ d^+(X_6|X_5) + d^+(X_7|,X_6,X_5), 1 \} \bigg] \\
&\leq d^+(X_2) d^+(X_4) d^+(X_5) \bigg[ \inf \{ d^+(X_6|X_5) + d^+(X_7|,X_5), 1 \} \bigg] 
\end{align*}
 \end{example}

There are various practical corollaries to the simple theorem above:

\begin{corollary}
If $\mathcal{G}$ is decomposable and $C_i$ lies on the minimal sequence between $C_1$ and $C_k$ then if all attributes are in $C_k$ then the probabilities of $C_i$ have higher influence on $C_k$ than those of $C_1.$
\end{corollary}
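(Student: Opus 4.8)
The plan is to compare the two impacts $I(C_k \mid C_i)$ and $I(C_k \mid C_1)$ directly as products of diameters along nested simple paths in the junction tree, exploiting that every diameter lies in $[0,1]$ together with the marginalisation bound of Lemma \ref{diameter_marginalisation}. Since all attributes lie in $C_k$, it is precisely the influence on this target clique that we wish to rank.

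First I would observe that because $C_i$ lies on the minimal sequence $(C_1,\ldots,C_k)$, the truncated sequence $(C_i,C_{i+1},\ldots,C_k)$ is itself a simple path in the junction tree with separators $(S_{i+1},\ldots,S_k)$. Hence Theorem \ref{clique_thm} applies to it, and by the definition of impact $I(C_k \mid C_i)$ equals the product of the diameters along this shorter path, namely $d^+(P(S_{i+1}\mid C_i))$ followed by the common tail $\Phi' \triangleq d^+(P(S_{i+2}\mid S_{i+1}))\cdots d^+(P(C_k\mid S_k))$. The longer impact then factorises as $I(C_k \mid C_1) = \Psi \cdot d^+(P(S_{i+1}\mid S_i)) \cdot \Phi'$, where the prefix $\Psi \triangleq d^+(P(S_2\mid C_1))\cdots d^+(P(S_i\mid S_{i-1}))$ collects the diameters accumulated in travelling from $C_1$ down to the separator $S_i$.

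With the common tail isolated, it suffices to show $d^+(P(S_{i+1}\mid C_i)) \ge \Psi\cdot d^+(P(S_{i+1}\mid S_i))$. Here I would invoke two facts. Each diameter is a total variation distance and so lies in $[0,1]$; consequently the prefix $\Psi$, being a product of such diameters, satisfies $\Psi \le 1$. Secondly, since $S_i \subseteq C_i$, passing from the conditional $S_{i+1}\mid C_i$ to $S_{i+1}\mid S_i$ marginalises out $C_i \setminus S_i$, so Lemma \ref{diameter_marginalisation} yields $d^+(P(S_{i+1}\mid S_i)) \le d^+(P(S_{i+1}\mid C_i))$. Chaining these gives $\Psi\cdot d^+(P(S_{i+1}\mid S_i)) \le d^+(P(S_{i+1}\mid C_i))$, and multiplying through by $\Phi' \ge 0$ delivers $I(C_k \mid C_1) \le I(C_k \mid C_i)$, i.e. the influence of $C_i$ on $C_k$ dominates that of $C_1$.

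The only delicate point, and the one I expect to be the main obstacle, is that the first diameter on the truncated path conditions on the full clique $C_i$, whereas on the original path the corresponding edge conditions only on its separator $S_i$; without care one might naively cancel a factor that is in fact not shared between the two products. Lemma \ref{diameter_marginalisation} resolves this cleanly, since it guarantees that enlarging the conditioning set from $S_i$ to $C_i$ can only increase the diameter, so the discrepancy works in our favour rather than against us. A boundary remark is also warranted for the degenerate case $i=1$, where $\Psi$ is an empty product equal to $1$ and the inequality holds with equality, consistent with the claim being a domination rather than a strict inequality.
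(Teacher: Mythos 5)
Your proposal is correct. The paper in fact states this corollary with no proof at all (it is offered as one of the ``practical corollaries to the simple theorem above''), so the comparison is necessarily with the argument the authors leave implicit: the impact is a product of diameters, each lying in $[0,1]$, and the path from $C_i$ to $C_k$ is a suffix of the path from $C_1$ to $C_k$, so dropping the prefix of factors can only increase the product. Your write-up supplies exactly this argument, and moreover identifies and correctly handles the one genuine subtlety that a naive ``cancel the common factors'' reading misses: the first edge of the truncated path carries the diameter $d^{+}(P(S_{i+1}\mid C_i))$, conditioned on the whole clique, whereas the corresponding edge in the long product carries $d^{+}(P(S_{i+1}\mid S_i))$, conditioned only on the separator. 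Your appeal to Lemma \ref{diameter_marginalisation} to show $d^{+}(P(S_{i+1}\mid S_i))\leq d^{+}(P(S_{i+1}\mid C_i))$, combined with $\Psi\leq 1$, closes that gap cleanly, and your remark that the inequality is non-strict (with equality at $i=1$) is the right reading of ``higher influence'' given that impacts are only upper bounds. In short, you have written out, carefully and correctly, the proof the paper omits.
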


As we indicated above, these bounds can be applied to \textit{any} BN.  We recommend following the construction below to ensure that your BN is in a suitable format to apply Theorem \ref{clique_thm}: 

\begin{itemize}
\item Begin with a BN $\mathcal{G}$, the diameters of whose CPTs have been
provisionally elicited.
\item Identify a donating variable or complete vector $\boldsymbol{X}_{i}$ of $\mathcal{G}$ and the vector of focus $\boldsymbol{X}_{k}$.
\item Find the ancestral set of $\boldsymbol{X}_{i},\boldsymbol{X}_{k}$ in  $\mathcal{G}$.
\item Construct the ancestral graph, $A$, which has variables $(X_{1},X_{2},\ldots ,X_{n})$ where the order of these vertices are chosen compatible with $\mathcal{G}$.
\item Create a triangularised version, $A^{\ast}$, of $A$ and find its junction tree $J$.  Denote the clique containing $\boldsymbol{X}_{i}$ as $C_{1}$ and the clique containing $\boldsymbol{X}_{k}$, $C_{k}$. 
\item Find the single path $J^{\ast}$ starting from clique $C_{1}$ to $
C_{k}$ labelling the cliques in order $C_{1},C_{2},\ldots ,C_{k}$. 
\item Remove all variables that are not in one of these cliques.
\end{itemize}

Note that these influences provide a very useful tool for prioritisation of the elicitation in a BN.  For example, if we can obtain estimates of influence across a junction tree (either from direct elicitation of $\delta $ or alternatively after having performed a preliminary coarse elicitation of the corresponding CPTs) then we can use these influences to identify which of those CPTs to refine.   For example suppose all attributes consisting of the subvectors of variables of interest lie in a single clique.  We can then follow the simple guidelines:

\begin{itemize}
\item Refine the elicitation of the CPTs whose attributes and parents lie in this clique,
\item Elicit the CPTs associated with parents/separators with the most influence,
\item Use the influence formula (Theorem \ref{clique_thm}) to guide the refinement of the CPTs associated with other parents or parents of parents.
\end{itemize}

\subsection{Approximations associated with a general BN}
In a junction tree each vector has just a single parent within a given compatible ordering.  Of course in the case of a BN this is no longer necessarily true.  We would still like to find the impact bound of one variable on another and so annotate each of its directed edges with a value between zero and one which reflects this.  The result below gives us a way of coding this impact in a useful way. 

Suppose $\boldsymbol{Y}$, taking values $\boldsymbol{y}\in \mathbb{Y}$, is potentially dependent on $k$ vectors $\boldsymbol{X}=\left( \boldsymbol{X}_1,\boldsymbol{X}_2,\ldots \boldsymbol{X}_k\right)$, taking values $\boldsymbol{x}=\left( \boldsymbol{x}_1,\boldsymbol{x}_2,\ldots,\boldsymbol{x}_k\right) \in \mathbb{X=X}_1\times \mathbb{X}_2\times \cdots \times \mathbb{X}_k$.  For $j=1,2,\ldots ,k$ let $\boldsymbol{x}_{\widehat{j}}\in \mathbb{X}_{\widehat{j}}\triangleq \mathbb{X}_{\left\{1,2,\ldots ,k\right\} \backslash \left\{ j\right\} }$ be a vector of values of other variables $\boldsymbol{X}_{\widehat{j}}$.  Let the CPT of $\boldsymbol{Y}$ given $\boldsymbol{X}$ be $P$ so that its diameter is given by 
\begin{align*}
d^{+}(P)= \frac{1}{2} \max_{\boldsymbol{x,x}^{\prime }\in \mathbb{X}}\left\{ \sum_{\boldsymbol{y} \in \mathbb{Y}} \left\vert p_{xy}-p_{x^{\prime }y}\right\vert \right\}.
\end{align*}

\begin{definition}
\label{arc_deletion}
Let the diameter $d^+_j$ of $\boldsymbol{Y}$ to $\boldsymbol{X}_j$  be defined by 
\begin{align*}
d^+_{j}= \frac{1}{2} \max_{\boldsymbol{x}_{\widehat{j}}\in \mathbb{X}_{\widehat{j}}}\max_{\boldsymbol{x}_j\boldsymbol{,x}_j^{\prime }\in \mathbb{X}_j}\left\{ \sum_{\boldsymbol{y} \in \mathbb{Y}}\left\vert p_{xy}-p_{x^{\prime }y}\right\vert \right\}.
\end{align*}
\end{definition}

So the diameter $d^+_j$ is the maximum extra effect varying the value of $\boldsymbol{x}_j$ can have on the distribution of $\boldsymbol{Y}$ for any fixed value $\boldsymbol{x}_{\widehat{j}}\in \mathbb{X}_{\widehat{j}}$ of the other variables.  Notice in particular that
\begin{align*}
\boldsymbol{Y}\amalg \boldsymbol{X}_j|\boldsymbol{X}_{\widehat{j}}\Leftrightarrow d^+_j=0.
\end{align*}
Thus in a formal sense, $d^+_j$ is a measure of the \textit{extent} by which this conditional independence is violated and the merit of knowing the value of $\boldsymbol{X}_j$ might have once we knew the value of $\boldsymbol{X}_{\widehat{j}}$.  We now have the simple but pleasing additive relationship between $d^{+}(P)$ and $d^+_j$, $j=1.2,\ldots ,k$:

\begin{theorem}
Under the notation above 
\begin{align*}
d^{+}(P)\leq \sum_{j=1}^k \delta _j .
\end{align*}
\end{theorem}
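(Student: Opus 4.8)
The plan is to deduce the bound from the triangle inequality for total variation distance, applied to a one-block-at-a-time interpolation between the two rows of $P$ that realise its diameter. Since $d^{+}(P)=\max_{\boldsymbol{x},\boldsymbol{x}'\in\mathbb{X}}d_V(\boldsymbol{p}_{\boldsymbol{x}},\boldsymbol{p}_{\boldsymbol{x}'})$ and $d^+_j=\max_{\boldsymbol{x}_{\widehat{j}}}\max_{\boldsymbol{x}_j,\boldsymbol{x}_j'}d_V(\boldsymbol{p}_{\boldsymbol{x}},\boldsymbol{p}_{\boldsymbol{x}'})$, where in the latter only the $j$th block is allowed to vary while $\boldsymbol{x}_{\widehat{j}}$ is held fixed (this is the quantity written $\delta_j$ in the statement), the whole claim is really a statement about the rows of the stochastic matrix $P$; nothing about the BN structure is needed beyond these two definitions.

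First I would fix a pair $\boldsymbol{x}^{\ast}=(\boldsymbol{x}_1^{\ast},\ldots,\boldsymbol{x}_k^{\ast})$ and $\boldsymbol{x}^{\ast\ast}=(\boldsymbol{x}_1^{\ast\ast},\ldots,\boldsymbol{x}_k^{\ast\ast})$ attaining the maximum in $d^{+}(P)$. I would then build the chain of configurations $\boldsymbol{z}^{(0)},\boldsymbol{z}^{(1)},\ldots,\boldsymbol{z}^{(k)}$ given by $\boldsymbol{z}^{(0)}=\boldsymbol{x}^{\ast\ast}$ and $\boldsymbol{z}^{(j)}=(\boldsymbol{x}_1^{\ast},\ldots,\boldsymbol{x}_j^{\ast},\boldsymbol{x}_{j+1}^{\ast\ast},\ldots,\boldsymbol{x}_k^{\ast\ast})$, so that $\boldsymbol{z}^{(k)}=\boldsymbol{x}^{\ast}$ and consecutive configurations $\boldsymbol{z}^{(j-1)},\boldsymbol{z}^{(j)}$ differ only in their $j$th block. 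The triangle inequality for $d_V$ (equivalently for the $\ell_1$ norm on the rows) then gives
\begin{align*}
d^{+}(P)=d_V(\boldsymbol{p}_{\boldsymbol{x}^{\ast}},\boldsymbol{p}_{\boldsymbol{x}^{\ast\ast}})\leq\sum_{j=1}^{k}d_V(\boldsymbol{p}_{\boldsymbol{z}^{(j)}},\boldsymbol{p}_{\boldsymbol{z}^{(j-1)}}).
\end{align*}
For each $j$ the two rows $\boldsymbol{p}_{\boldsymbol{z}^{(j)}}$ and $\boldsymbol{p}_{\boldsymbol{z}^{(j-1)}}$ share a common value of $\boldsymbol{x}_{\widehat{j}}$ and differ only in the $j$th block, so the pair $(\boldsymbol{z}^{(j-1)},\boldsymbol{z}^{(j)})$ is one of the competitors in the maximum defining $d^+_j$; hence $d_V(\boldsymbol{p}_{\boldsymbol{z}^{(j)}},\boldsymbol{p}_{\boldsymbol{z}^{(j-1)}})\leq d^+_j$. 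Summing over $j$ yields $d^{+}(P)\leq\sum_{j=1}^{k}d^+_j=\sum_{j=1}^{k}\delta_j$, as required.

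The argument is short, and the only point that needs care—and hence the main obstacle—is the bookkeeping in the final bounding step: at stage $j$ the fixed coordinates $\boldsymbol{x}_{\widehat{j}}$ are a mixture of starred and double-starred values (the later blocks still carry $\ast\ast$, the earlier blocks already carry $\ast$), so one must check that this particular configuration is genuinely admissible in the maximisation defining $d^+_j$. It is, precisely because $d^+_j$ ranges over \emph{all} $\boldsymbol{x}_{\widehat{j}}\in\mathbb{X}_{\widehat{j}}$, so any specific mixture is allowed. I would also emphasise that the inequality is typically loose and can be strict, since the single maximising pair $(\boldsymbol{x}^{\ast},\boldsymbol{x}^{\ast\ast})$ rarely differs in the worst possible way simultaneously in every block; this is consistent with the paper's reading of $d^+_j$ as the marginal extra effect of $\boldsymbol{X}_j$ once $\boldsymbol{X}_{\widehat{j}}$ is known.
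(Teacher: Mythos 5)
Your proof is correct and follows essentially the same route as the paper's: both interpolate between the two maximising rows by changing one block $\boldsymbol{x}_j$ at a time and apply the triangle inequality for $d_V$, bounding each intermediate step by $\delta_j$ since the fixed coordinates $\boldsymbol{x}_{\widehat{j}}$ range freely in the definition of $d^+_j$. Your write-up is in fact a cleaner rendering of the same chain argument, with the admissibility of the mixed starred/unstarred configurations made explicit.
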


\begin{proof}
Here we simply use the triangle inequality to bound $d^{+}(P)$ changing the entries of the conditioning variables $\boldsymbol{X}_j$ one at a time.  So if $\boldsymbol{x}=\left( \boldsymbol{x}_1,\boldsymbol{x}_2,\ldots ,\boldsymbol{x}_k\right) ,\boldsymbol{x}^{\prime }=\left( \boldsymbol{x}_1^{\prime },\boldsymbol{x}_2^{\prime },\ldots ,\boldsymbol{x}_k^{\prime }\right) ,\boldsymbol{x}(0)=\boldsymbol{x},\boldsymbol{x}(1)=\left( \boldsymbol{x}_1^{\ast },\boldsymbol{x}_2,\ldots ,\boldsymbol{x}_k\right) ,\boldsymbol{x}(2)=\left( \boldsymbol{x}_1^{\prime },\boldsymbol{x}_2^{\ast },\mathbf{x}_3\ldots ,\boldsymbol{x}_k\right),\ldots ,\boldsymbol{x}(k)=\left( \boldsymbol{x}_1^{\prime },\boldsymbol{x}_2^{\prime },\ldots ,\boldsymbol{x}_k^{\ast }\right)$, then 
\begin{eqnarray*}
\max_{\boldsymbol{x,x}^{\prime }\in \mathbb{X}}\left\{ \sum_{j=1}^{n^{\prime }}\left\vert p_{xy}-p_{x^{\prime }y}\right\vert \right\} &\leq &\sum_{i=1}^{k}\max_{\boldsymbol{x}(i)\boldsymbol{,x}(i-1)\in \mathbb{X}}\left\{ \sum_{j=1}^{n^{\prime }}\left\vert p_{\boldsymbol{x}(i)y}-p_{\boldsymbol{x}(i-1)y}\right\vert \right\} \\
&=&\sum_{i=1}^{k}\max_{\boldsymbol{x}_{\widehat{j}}\in \mathbb{X}_{\widehat{j}}}\max_{\boldsymbol{x}_j\boldsymbol{,x}_j^{\ast }\in \mathbb{X}_j}\left\{ \sum_{j=1}^{n^{\prime }}\left\vert p_{xy}-p_{x^{^{\ast }}y}\right\vert \right\}.
\end{eqnarray*}
since by definition of $\boldsymbol{x}$ and $\boldsymbol{x}^\prime$, the adjacent CPTs appearing in the sum above differ only in the $i^{th}$ entry.
\end{proof} 

\subsection{Robustness to approximation by a sparser BN and edge deletion}
Often a BN is chosen to be sparser than it would be were we to have more information or time.  This happens for a variety of reasons.  For example when eliciting a BN we often ask for the list of the \emph{most important} variables on which a specific variable $X$ might depend, defining this phrase by asking that variables not included in the list could be expected to have only a small influence on $X$.  This restriction is imposed because it is difficult for a client to think clearly about the interrelationships between more than a handful of variables. Increasing the number of different joint levels on the conditioning variables quadratically increases the number of entries in the CPT that need to be elicited. 

If data is used to inform the model choice then a severe penalty is often imposed or is implicit to ensure the selection of models with smaller size parameter spaces -- which in this context usually implies sparser associated graphs.  Finally, for reasons of implementability, it is quite common for a search of candidate BNs to include only those graphs whose nodes have no more than a fixed number of parents: the limit often set to be two or three, see e.g.  \cite{cussens2011}. 

It is therefore very important to properly understand the implications of these potential over-simplifications on the robustness of the BN model. We present some corollaries on the use of these results.

\subsection{Edge Deletion}
When constructing BN systems we seek the model which best describes the underlying physical process.  However, in reality we are often limited by restrictions on resources and time which lead us to instead develop a model that is as large as it need be.  The size of a BN can grow exponentially by adding more variables, relationships or even states.  So eliciting these probabilities becomes problematic.  Therefore, a frequent simplification of a model can be to delete unnecessary edges that impact only a little on hte outputs of the system. 

The methodology we have introduced in earlier sections can be directly applied to this problem in order to quantify the cost of removing a certain variable from the parent set of another variable. 

\begin{example}
Return once again to the CPT of `Tree Condition' given its binary parent `Drought' and three-state parent `Annual Rainfall', as in Example \ref{example_TreeCondition_CPT}.  Denote the rows of $P$ by $\boldsymbol{p}_i$ for $i=1,2,\hdots,6$.  If we are interested in the effects of deleting the arc Drought $\longrightarrow$ Tree Condition, then we can use Definition \ref{arc_deletion}  directly: 
\begin{align*}
\max_{ \{yes,no \} } \max_{ \{Below, Avg., Above\} } \{ d_V(\boldsymbol{p}_i, \boldsymbol{p}_{i^\prime}) \} &= 
%\max \{d_V(\boldsymbol{p}_1,\boldsymbol{p}_4), d_V(\boldsymbol{p}_2,\boldsymbol{p}_5),d_V(\boldsymbol{p}_3,\boldsymbol{p}_6) \} \\
 0.6
\end{align*}
Alternatively, deleting arc Annual Rainfall $\longrightarrow$ Tree Condition gives us
\begin{align*}
\max_{\{ Below, Avg., Above\} } \max_{\{ yes,no \}} \{ d_V(\boldsymbol{p}_i, \boldsymbol{p}_{i^\prime}) \} &= 
%&= \{ \max\{ d_V(\boldsymbol{p}_1,\boldsymbol{p}_2),d_V(\boldsymbol{p}_2,\boldsymbol{p}_3),d_V(\boldsymbol{p}_1,\boldsymbol{p}_3)\}, \max\{d_V(\boldsymbol{p}_4,\boldsymbol{p}_5), d_V(\boldsymbol{p}_5,\boldsymbol{p}_6), d_V(\boldsymbol{p}_4,\boldsymbol{p}_6)\} \} \\
 0.2
\end{align*}
Below we present the full BN with each edge annotated with the quantitative effect on the child if we delete the parent.
\begin{figure}[H]
\begin{center}
\includegraphics[width=\textwidth]{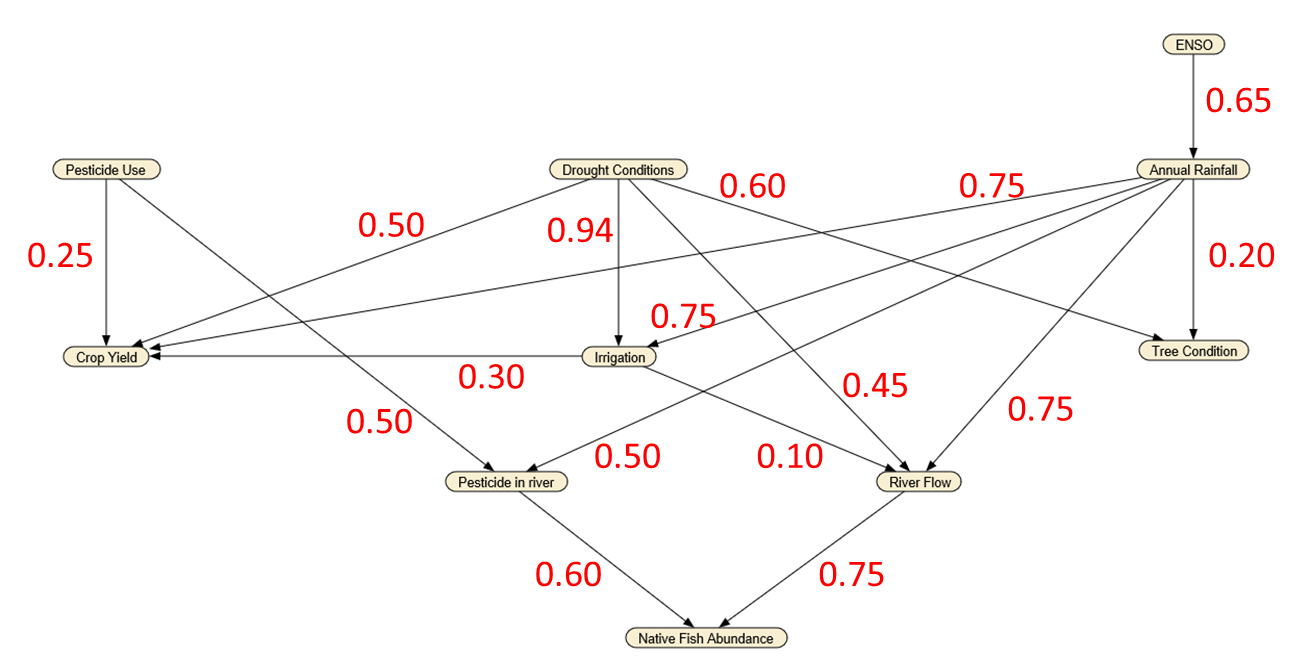}
\caption{Effects of Edge Deletion in the Native Fish BN.}
\end{center}
\end{figure}
\end{example}

The higher the value of $\delta_j$ the larger the effect of deleting the arc $\boldsymbol{X}_j \longrightarrow \boldsymbol{Y}$, because it is indicating that the corresponding rows of other parent responses are dissimilar. 

Now of course this model is already well designed and so we find that most edges need to be there. However, the possible exception if the edge from Irrigation to River Flow.  In candidate models devised early in an analysis, or in large BNs, we often find that many edges can be omitted without strong effects on the outputs of the system.

Note that this process scales up easily to handle parent sets of discrete BNs with numerous nodes and large numbers of states, due to the variation distance being a simple and transparent piece of arithmetic. 

\subsection{Level amalgamation}
One practical issue found by discrete BN modellers is the number of levels each random variable within the system should be assigned. Obviously there is a trade-off here.  The finer the division of levels, the more nuanced the BN can be.  On the other hand, the fewer the number of levels, the easier it will be to faithfully elicit or efficiently estimate the probabilities within a BN.  One advantage of using the variation approach for robustness is that such considerations can be taken under the same technical umbrella as other necessary approximations, such as edge deletion.  We simply evaluate the potential that such simplifications might have on the distribution of the attributes of the problem, just as when considering whether or not to keep a weak edge in the system. 

When considering amalgamating levels within a specified random variable we must ensure that the interpretation of the states can still be understood and quantified by experts.  When those variables are ordinal and have a monotonic relationship with its neighbours we would therefore recommend that you only consider amalgamating consecutive node levels.  For example, the node of Annual Rainfall in Example \ref{example_TreeCondition_CPT} had levels $\{Below Average, Average, Above Average\}$ and if we were to combine Below Average with Above Average we would not have a cohesive state for the experts to quantify.

The second step in level amalgamation is deciding how to combine the probabilities associated with those levels that are to be amalgamated.  We recommend taking a simple row average, because the convexity of variation distance tells us that this will enable us to avoid increasing the diameter of the original CPT and more importantly this method does not require additional information (which would otherwise be needed to perform a weighted row average). 

Occasionally the modeller or expert will have an intuitive feel for which states should be combined, possibly from past experience or relevant data.  However, sometimes this may not be obvious.  In the latter case, we can calculate the variation distance between the considered states and combine the closest states first, then find the next closest state and add to the amalgamation iteratively until the combination appears to induce significant variation distance from the original.

\begin{example}
For example, let us once again consider the `Tree Condition' CPT from \ref{example_TreeCondition_CPT}.  To reduce the three-state node Annual Rainfall to a binary node we could either combine Below Average with Average (case i) or Above Average with Average (case ii).  To decide we compare the variation distance between rows $\boldsymbol{p}_i$:
\begin{align*}
\text{Case i: \hspace{1cm}} &\max\{d_V(\boldsymbol{p}_1,\boldsymbol{p}_2), d_V(\boldsymbol{p}_4, \boldsymbol{p}_5)\} = 0.1 \\
\text{Case ii: \hspace{1cm}} &\max\{d_V(\boldsymbol{p}_2,\boldsymbol{p}_3), d_V(\boldsymbol{p}_5, \boldsymbol{p}_6)\} = 0.1
\end{align*}
In this instance we can arbitrarily chose between the two, so we shall opt for case (i) to form the amalgamated state `Average or Below' using a simple average of relevant rows to obtain $P^\prime$:
\begin{equation*}
P = \begin{pmatrix}
0.20 & 0.60 & 0.20 \\
0.25 & 0.60 & 0.15 \\
0.30 & 0.60 & 0.10 \\
0.70 & 0.25 & 0.05 \\
0.80 & 0.18 & 0.02 \\
0.90 & 0.09 & 0.01
\end{pmatrix}, \hspace{1cm}
P^\prime = \begin{pmatrix}
0.225 & 0.60 & 0.175 \\
0.30 & 0.60 & 0.10 \\
0.725 & 0.215 & 0.06 \\
0.90 & 0.09 & 0.01
\end{pmatrix}.
\end{equation*}
Calculating that $d_V^+(P,P^\prime)=0.075$ in results such as Theorem \ref{theorem_totvar_diameter} shows that the effect of using this amalgamated CPT rather than the original is small. 
\end{example} 

\section{Some principled strategies for BN creation}
\label{strategies_section}
Obviously the evaluation criteria we indicate here can be embedded into a formal protocol.  However, there are many considerations that a user has to consider before undertaking model construction: transparency of the model, computational issues, elicitation constraints and so forth, which vary in importance depending on the context of the model building.  So setting a bound on any effects or perturbations against differing approaches is often best undertaken informally.  However, we acknowledge that the framework we have presented here is sufficiently formal to admit generalisation and this is work that we plan to undertake next. 

To implement our techniques as efficiently as possible we recommend two differing approaches tailored to the specific circumstances of the modeller.  Firstly, there are occasions in which we have obtained provisional information from one expert who can recommend nodes, levels, interactions and provisional CPTs before undertaking a more formal elicitation conference with multiple experts.  Such was the case in the pollinator example discussed in \cite{Pollinator_workshop_JAR_paper} and \cite{bookchapter}.  In this particular scenario we can begin to design the analysis by using the bounds discussed earlier on the preliminary values stated by the expert.  We recommend starting by eliciting attributes and nodes of interest before working systematically backwards along the chain of inference to discover parent nodes and conditional independences, performing variation measures on preliminary CPT values to determine the efficacy of including variables in the model.  Of course after the full elicitation has taken place the robustness analyses suggested above can be repeated for a final sensitivity analysis. 

In situations when we are starting the model with no such preliminary information it may be wisest to attempt to elicit the value of the diameter of the CPT directly, before eliciting the full matrix, so that full elicitation is not undertaken before we can derive concrete bounds on the usefulness of this data harvesting exercise.  This can then be bounded and decisions undertaken on whether to include certain variables in the chain or not.  To elicit the diameter directly we need to ascertain the largest differences between rows of a CPT, which corresponds to requesting the ``best case scenario'' probabilities and the ``worst case scenario'' probabilities before calculating the variation distance between the two. 

By following this procedure we therefore continuously appraise and compare each possible simplification  against the potential accuracy of an analysis, weighted against the issues provided by a simpler representation of a model.  We have demonstrated above that it can be \textit{proved} that in many cases the effects of various simplifications are often very small, and approximations based on these simplifications are justified from a pragmatic point of view.  We also note that some of the best approximations to use are often not the ones currently undertaken in practice.  For example we often find that using an approximation which deletes an edge can cause significant changes, whilst allowing dependence only on subsets of levels performs much better. 

\section{Conclusions}
We have demonstrated here how the properties of variation distance can be harnessed to study the robustness of a discrete BN, if certain target variables are known a priori to be those of primary interest.  Although all our illustrative examples in this paper have perforce been of moderate size, our methods become ever more useful as the number of nodes in the BN increases. Even when this number is huge, we can show it is possible to identify a priori which features of the full joint distribution will have the strongest impact on the target variables of interest, and therefore employ effective and expedient approximations to make inferences which are both provably accurate and feasible for the task at hand.  In such models, since the simple paths between learned variables and attribute variables is typically much longer, it is possible to formally demonstrate that some remote variables are just not worth eliciting directly, but should be marginalised over.  

The approach we have introduced in this paper relies on the well-studied variation distance which naturally embeds conditional independence relationships between variables.  We have therefore devised a seamless way of looking at perturbed versions of a BN in a manner which enables us to apply the same devices to generic effects, be these perturbations associated with edge deletion, the knock-on effect of learning certain variables, misquoting probabilities within CPTs or changing the number of levels for nodes.  All of these different alterations can be compared on an equal footing whereas previous work usually depended on model selection of BNs, using methodology such as Bayes Factors, which focused on one particular perturbation at a time.  We note that many of the techniques to communicate visually the bounding effect of one node on a target node can be straightforwardly adapted and reapplied to this domain, such as the use of heatmaps demonstrated in  \cite{Albrecht2014}.  

There is of course much work to still be undertaken in this field, starting with refining the bounds we have developed here.  The robustness studies we have introduced can also be applied to context-specific BNs where we have a natural trade-off between the number of probabilities to elicit and the robustness of the model.  Typically if we elicit less probabilities for a context-specific model we can show that we weaken the robustness of the system due to the constraint of forcing inputs to be the same.  Similarly within this paper we have had no space to consider the robustness to the choice of probability distribution on the entries of the CPTs of a BN.  In \cite{smith-daneshkhah2010}, BN robustness associated with the inputs of the distribution in terms of the local DeRobertis distance is studied.  \cite{smith-rigat2012} provide bounds on posterior variation distances.  Therefore a fairly straightforward extension to the variation bounds we have presented here can be developed by carefully combining our results with the local DeRobertis distance to provide a comprehensive robustness analysis when necessary.  Essentially we can show that with sufficient data and global independence assumptions the most robust CPTs are the ones whose probabilities are best known. 

Finally, recent theoretical advances have suggested that if a Bayesian accepts the their model is only approximate, the M-open scenario \citep{Bernardo_Smith_1996}, then Bayesian learning using Bayes Rule may not be optimal and that other updating rules based on divergences other than the KL divergence should be considered.  There are exciting new possibilities of combining this technology with the robustness methods described here when that divergence is defined as the variation distance \citep{Jack_Smith_Holmes}.

Our ideas also apply directly to the Dynamic BN where the robustness of the system can be even more important because the dynamic nature of the problem makes the model much more complex.  Throughout this paper, for simplicity, we have considered only robustness as it applies to finite discrete BNs.  However the whole technology we describe here of course translates seamlessly into tools for examining continuous and mixed Bayesian Networks.  Using the variation distance on these highly structured and complex Markov Processes using the approach highlighted here can help us to determine the robustness of DBNs to dynamic effects.  Work in this more general setting has already begun.  We hope we have demonstrated that this is actually a very fruitful way of addressing robustness within this family of graphical models. 

\newpage
\bibliographystyle{plainnat}
\bibliography{D:/skw/Documents/University_of_Warwick/Thesis_Library/master}{}

\newpage
\begin{appendices}
\section{Proofs}
\subsection{Proof of Lemma \ref{mixing_lemma}}
\label{appendixA1}
To prove Lemma \ref{mixing_lemma} we shall first need some simple intermediary results: \\ \\
Let $\mathbf{p}, \mathbf{q}_0, \mathbf{q}_1$ be three vectors of mass functions and define 
\begin{equation*}
\mathbf{q}_\alpha \triangleq (1-\alpha)\mathbf{q}_0 + \alpha \mathbf{q}_1,
\end{equation*}
for $0 \leq \alpha \leq 1$. 
\begin{lemma}
\label{Lemma A1}
\begin{equation*}
d_V(\mathbf{p},\mathbf{q}_\alpha) \leq (1-\alpha)d_V(\mathbf{p},\mathbf{q}_0)+\alpha d_V(\mathbf{p},\mathbf{q}_1)
\end{equation*}
\end{lemma}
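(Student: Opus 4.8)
The plan is to prove this directly from the definition of total variation distance by exploiting its expression as a (scaled) $\ell_1$ norm together with the elementary triangle inequality for absolute values. The one genuinely useful observation is that the reference vector $\mathbf{p}$ can itself be written as the convex combination $\mathbf{p} = (1-\alpha)\mathbf{p} + \alpha\mathbf{p}$, since $(1-\alpha) + \alpha = 1$. This lets me match the convex structure of $\mathbf{q}_\alpha$ componentwise and reduces everything to the subadditivity of $|\cdot|$.

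Concretely, first I would expand $d_V(\mathbf{p}, \mathbf{q}_\alpha) = \tfrac{1}{2}\sum_{x} |p(x) - q_\alpha(x)|$ and substitute $q_\alpha(x) = (1-\alpha)q_0(x) + \alpha q_1(x)$. Using the rewriting of $\mathbf{p}$ above, each summand becomes
\begin{equation*}
p(x) - q_\alpha(x) = (1-\alpha)\bigl(p(x) - q_0(x)\bigr) + \alpha\bigl(p(x) - q_1(x)\bigr).
\end{equation*}
Since $0 \leq \alpha \leq 1$, both coefficients $1-\alpha$ and $\alpha$ are nonnegative, so the triangle inequality gives
\begin{equation*}
|p(x) - q_\alpha(x)| \leq (1-\alpha)\,|p(x) - q_0(x)| + \alpha\,|p(x) - q_1(x)|.
\end{equation*}

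Finally I would sum this inequality over all $x \in \mathbb{X}$ and multiply through by $\tfrac{1}{2}$; the right-hand side separates into $(1-\alpha)\,\tfrac{1}{2}\sum_x |p(x)-q_0(x)| + \alpha\,\tfrac{1}{2}\sum_x |p(x)-q_1(x)|$, which is precisely $(1-\alpha)d_V(\mathbf{p},\mathbf{q}_0) + \alpha\,d_V(\mathbf{p},\mathbf{q}_1)$, as required. There is essentially no obstacle here: the only point requiring any care is ensuring the coefficients are nonnegative so that the triangle inequality applies term by term, and this is guaranteed by the hypothesis $0 \leq \alpha \leq 1$. This lemma is the scalar-mixing building block that I expect is then bootstrapped (by induction on the number of mixture components, or by applying it twice) to obtain the full bilinear mixing bound of Lemma \ref{mixing_lemma}.
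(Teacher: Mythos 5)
Your proposal is correct and follows essentially the same route as the paper: both rewrite $\mathbf{p}=(1-\alpha)\mathbf{p}+\alpha\mathbf{p}$ so that each summand becomes $(1-\alpha)(p_j-q_{0j})+\alpha(p_j-q_{1j})$, bound it by $(1-\alpha)|p_j-q_{0j}|+\alpha|p_j-q_{1j}|$, and sum. The only cosmetic difference is that the paper verifies this bound by a same-sign/different-sign case analysis where you invoke the triangle inequality directly, which is cleaner.
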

\begin{proof}
Note that for $0\leq \alpha \leq 1$,
\begin{equation*} 
\mathbf{p}=(1-\alpha )\mathbf{p}+\alpha \mathbf{p}
\end{equation*}
so that 
\begin{align*}
2d_{V}(\mathbf{p},\mathbf{q}_{\alpha }) &=\sum_{j=1}^n\left\vert p_{j}-(1-\alpha )q_{0j}-\alpha q_{1j}\right\vert  \\
&=\sum_{j=1}^n\left\vert (1-\alpha)p_j+\alpha p_j-(1-\alpha )q_{0j}-\alpha q_{1j}\right\vert \\
&=\sum_{j=1}^n\left\vert (1-\alpha )(p_{j}-q_{0j})+\alpha
(p_{j}-q_{1j})\right\vert 
\end{align*}
Next note that if $(p_{j}-q_{0j}),(p_{j}-q_{1j})$ are the same sign then 
\begin{equation*}
\left\vert (1-\alpha )(p_{j}-q_{0j})+\alpha (p_{j}-q_{1j})\right\vert =(1-\alpha )\left\vert p_{j}-q_{0j}\right\vert +\alpha \left\vert p_{j}-q_{1j}\right\vert 
\end{equation*}
whilst if $(p_{j}-q_{0j}),(p_{j}-q_{1j})$ are of different sign then 
\begin{align*}
\left\vert (1-\alpha )(p_{j}-q_{0j})+\alpha (p_{j}-q_{1j})\right\vert  &\leq \max \left\{ (1-\alpha )\left\vert p_{j}-q_{0j}\right\vert ,\alpha \left\vert p_{j}-q_{1j}\right\vert \right\}  \\
&\leq (1-\alpha )\left\vert p_{j}-q_{0j}\right\vert +\alpha \left\vert p_{j}-q_{1j}\right\vert .
\end{align*}
It follows that 
\begin{align*}
2d_{V}(\mathbf{p},\mathbf{q}_{\alpha }) &\leq \sum_{j=1}^{n}(1-\alpha )\left\vert p_{j}-q_{0j}\right\vert +\alpha \left\vert p_{j}-q_{1j}\right\vert  \\
&=(1-\alpha )\sum_{j=1}^{n}\left\vert p_{j}-q_{0j}\right\vert+\alpha \sum_{j=1}^{n} \left\vert p_{j}-q_{1j}\right\vert  \\
&=2\left\{ (1-\alpha )d_{V}(\mathbf{p},\mathbf{q}_{0})+\alpha d_{V}(\mathbf{p},\mathbf{q}_{1})\right\} 
\end{align*}
proving the result.
\end{proof}

This results proves that the variation distance between a mass function $\mathbf{p}$ and the mixture of two others is less than the mixture of the mass function $\mathbf{p}$ and the two extremal distributions $\mathbf{q}_0, \mathbf{q}_1$. We can now extend this result so that it applies to any finite mixtures: let $\pi = (\pi_1,\pi_2,\ldots,\pi_n)$ with $\sum_{i=1}^n \pi_i =1$, $\mathbf{q}_i=(q_{i1},q_{i2},\ldots,q_{in})$ and define
\begin{equation*}
\mathbf{q}_\pi \triangleq \sum_{i=1}^n \pi_i \mathbf{q}_i.
\end{equation*}
\begin{lemma}
\label{Lemma A2}
\begin{equation*}
d_V(\mathbf{p},\mathbf{q}_\pi) \leq \sum_{i=1}^n \pi_i d_V(\mathbf{p},\mathbf{q}_i).
\end{equation*}
\end{lemma}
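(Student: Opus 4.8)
The plan is to induct on the number $n$ of mixture components, using Lemma \ref{Lemma A1} as both the base case and the engine of the inductive step, since it is precisely the two-component version of the statement. The base case $n=2$ is exactly Lemma \ref{Lemma A1}, taking $\alpha = \pi_2$ and $1-\alpha = \pi_1$.

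For the inductive step I would suppose the claim holds for any mixture of $n-1$ mass functions and consider $\mathbf{q}_\pi = \sum_{i=1}^n \pi_i \mathbf{q}_i$. Assuming first that $\pi_n < 1$, peel off the final component by writing $\mathbf{q}_\pi = (1-\pi_n)\mathbf{q}_{\pi'} + \pi_n \mathbf{q}_n$, where $\mathbf{q}_{\pi'} \triangleq \sum_{i=1}^{n-1} \pi_i' \mathbf{q}_i$ is itself a genuine mixture with renormalised weights $\pi_i' \triangleq \pi_i / (1-\pi_n)$ summing to one. Applying Lemma \ref{Lemma A1} with $\alpha = \pi_n$, $\mathbf{q}_0 = \mathbf{q}_{\pi'}$ and $\mathbf{q}_1 = \mathbf{q}_n$ gives $d_V(\mathbf{p}, \mathbf{q}_\pi) \leq (1-\pi_n) d_V(\mathbf{p}, \mathbf{q}_{\pi'}) + \pi_n d_V(\mathbf{p}, \mathbf{q}_n)$. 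The inductive hypothesis then bounds $d_V(\mathbf{p}, \mathbf{q}_{\pi'})$ by $\sum_{i=1}^{n-1}\pi_i' d_V(\mathbf{p}, \mathbf{q}_i)$, and substituting cancels the factor $(1-\pi_n)$ against the denominators in the $\pi_i'$, collapsing everything to $\sum_{i=1}^n \pi_i d_V(\mathbf{p}, \mathbf{q}_i)$ as required.

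The only point needing care, and the nearest thing to an obstacle in an otherwise routine argument, is the degenerate case $\pi_n = 1$, where the renormalisation $\pi_i' = \pi_i/(1-\pi_n)$ divides by zero. This is dealt with by observing that $\pi_n = 1$ forces $\mathbf{q}_\pi = \mathbf{q}_n$, so both sides reduce to $d_V(\mathbf{p}, \mathbf{q}_n)$ and the inequality holds with equality; more generally one simply discards any component carrying zero weight before renormalising.

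As an alternative I would note that Lemma \ref{Lemma A2} admits an equally short direct proof bypassing the induction: since $\sum_{i=1}^n \pi_i = 1$ we may write $p_j = \sum_{i=1}^n \pi_i p_j$, so that $p_j - \sum_{i=1}^n \pi_i q_{ij} = \sum_{i=1}^n \pi_i (p_j - q_{ij})$, and the triangle inequality applied termwise followed by interchanging the order of the two finite sums yields $2 d_V(\mathbf{p}, \mathbf{q}_\pi) \leq \sum_{i=1}^n \pi_i \sum_{j=1}^n |p_j - q_{ij}| = \sum_{i=1}^n \pi_i \cdot 2 d_V(\mathbf{p}, \mathbf{q}_i)$; dividing by two completes it. I would likely present the inductive version to keep Lemma \ref{Lemma A1} doing visible work, but flag the direct route as the conceptually cleanest justification.
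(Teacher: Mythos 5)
Your proof is correct and follows essentially the same route as the paper's: induction on $n$, peeling off the last component with renormalised weights $\pi_i' = \pi_i(1-\pi_n)^{-1}$ and invoking Lemma \ref{Lemma A1} as both the base case and the inductive engine. Your explicit handling of the degenerate case $\pi_n = 1$ and the remark that a direct termwise triangle-inequality argument also works are small additions the paper omits, but they do not change the substance of the argument.
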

\begin{proof}
We shall proceed using induction. For the case $n=2$ we have $\mathbf{q}_\pi =\pi_1 q_1 + \pi_2 q_2$ with $\pi_1 +\pi_2 =1$ which means $\mathbf{q}_\pi=(1-\pi_2)q_1+\pi_2 q_2$. By applying the previous lemma we know that 
\begin{align*}
d_V(p,\mathbf{q}_\pi) &\leq (1-\pi_2)d_V(p,q_1) + \pi_2 d_V(p,q_2) \\
&= \pi_1 d_V(p,q_1) + \pi_2 d_V(p,q_2) \\
&= \sum_{i=1}^2 \pi_i d_V(p,q_i).
\end{align*} 
So this statement certainly holds true for $n=2$. Assume that it is true for $n-1$. Then note that for the case $n=n$:
\begin{align*}
\mathbf{q}_{\mathbf{\pi }}=\sum_{i=1}^{n}\pi _{i}\mathbf{q}_{i} &= \pi_1 \mathbf{q}_1+\pi_2 \mathbf{q}_2 + \ldots + \pi_{n-1}\mathbf{q}_{n-1}+\pi_n \mathbf{q}_n \\
&= (\pi_1 + \pi_2 + \ldots + \pi_{n-1})\mathbf{q}_0 + \pi_n \mathbf{q}_n \\
&= (1-\pi_n) \mathbf{q}_0 + \pi_n \mathbf{q}_n \\
&=(1-\alpha )\mathbf{q}_{0}+\alpha \mathbf{q}_{n} 
\end{align*}
where $\alpha \triangleq \pi_{n}$ and for $i=1,2,\ldots n-1$, $\pi _{i}^{\prime }\triangleq \pi _{i}(1-\pi_{n})^{-1}$, giving
\begin{equation*}
\mathbf{q}_{0}\triangleq \sum_{i=1}^{n-1}\pi _{i}^{\prime}\mathbf{q}_{i} = \frac{1}{1-\pi_n}\sum_{i=1}^{n-1} \pi_i q_i
\end{equation*}
By the last lemma
\begin{equation*}
d_{V}(\mathbf{p},\mathbf{q}_{\mathbf{\pi }})\leq (1-\alpha )d_{V}(\mathbf{p},\mathbf{q}_{0})+\alpha d_{V}(\mathbf{p},\mathbf{q}_{n}) 
\end{equation*}
where by the inductive hypothesis
\begin{equation*}
d_{V}(\mathbf{p},\mathbf{q}_{0})\leq \sum_{i=1}^{n-1}\pi_{i}^{\prime }d_{V}(\mathbf{p},\mathbf{q}_{i}).
\end{equation*}
Substituting the inductive hypothesis into our equation gives
\begin{align*}
d_{V}(\mathbf{p},\mathbf{q}_{\mathbf{\pi }}) &\leq
\sum_{i=1}^{n-1}(1-\alpha )\pi _{i}^{\prime }d_{V}(\mathbf{p},
\mathbf{q}_{i})+\alpha d_{V}(\mathbf{p},\mathbf{q}_{n}) \\
&=\sum_{i=1}^{n}\pi _{i}d_{V}(\mathbf{p},\mathbf{q}_{i})
\end{align*}
as required.
\end{proof}

\noindent \textbf{Lemma \ref{mixing_lemma}} stated that
\begin{equation*}
d_V(\mathbf{p}_{\pi^\prime},\mathbf{q}_\pi) \leq \sum_{i=1}^n \sum_{i^\prime=1}^{n^\prime} \pi_i \pi_{i^\prime}^\prime d_V(\mathbf{p}_{i^\prime},\mathbf{q}_i)
\end{equation*}
\begin{proof}
Here we simply use the symmetry of variation distance. From Lemma \ref{Lemma A2} we have that
\begin{equation}
\label{p1}
d_{V}(\mathbf{p}_{\mathbf{\pi }^{\prime }},\mathbf{q}_{\mathbf{\pi }})\leq \sum_{i=1}^{n}\pi _{i}d_{V}(\mathbf{p}_{\mathbf{\pi }^{\prime }},\mathbf{q}_{i}) 
\end{equation}
but for $i=1,2,\ldots ,n$ 
\begin{equation}
\label{p2}
d_{V}(\mathbf{p}_{\mathbf{\pi }^{\prime }},\mathbf{q}_{i})=d_{V}(\mathbf{q}_{i},\mathbf{p}_{\mathbf{\pi }^{\prime }})\leq
\sum_{i^{\prime }=1}^{n^{\prime }}\pi _{i}^{\prime }d_{V}(\mathbf{q}_{i},\mathbf{p}_{i^{\prime }})=\sum_{i^{\prime }=1}^{n^{\prime }}\pi_{i}^{\prime }d_{V}(\mathbf{p}_{i^{\prime }},\mathbf{q}_{i}) 
\end{equation}
So substituting Inequality \ref{p2} into Inequality \ref{p1} gives us our result.
\end{proof}

\subsection{Proof of Theorem \ref{theorem_totvar_diameter}}
\label{appendixA2}
\noindent Theorem \ref{theorem_totvar_diameter} stated that if we let $\mathbf{\pi}_{1}$ and $\mathbf{\pi}_{2}$ be two possible margins of $X$, suppose that $P$ is the (shared) CPT of $Y|X$ and that $\mathbf{\rho }_{1}$ and $\mathbf{\rho}_{2}$ are the margins of $Y$. Then 
\begin{equation*}
d_{V}(\mathbf{\rho }_{1},\mathbf{\rho }_{2})\leq d^+(P)d_{V}(\mathbf{\pi }_{1},\mathbf{\pi }_{2}) 
\end{equation*}
\begin{proof}
For $i=1,2,\dots,n$ let $\pi_1 \wedge \pi_2 \triangleq \min_i(\pi_1(i),\pi_2(i))$ and let us define
\begin{align*}
\pi_{1 \wedge 2} = \frac{\pi_1 \wedge \pi_2}{\sum (\pi_1 \wedge \pi_2)} = \frac{\pi_1 \wedge \pi_2 }{\beta}
\end{align*}
So by definition $1-\beta = d_V(\pi_1,\pi_2)$. \\
\indent Let 
\begin{align*}
\pi_1^\prime &= d_V(\pi_1, (\pi_1 \wedge \pi_2)) \\
\pi_2^\prime &= d_V(\pi_2, (\pi_1 \wedge \pi_2))
\end{align*}
However, we would like $\pi_1^\prime$ and $\pi_2^\prime$ to be densities, so let:
\begin{align*}
\pi_1^\ast &= \frac{\pi_1^\prime}{\sum \pi_1^\prime} = \frac{\pi_1^\prime}{1-\beta} \\
\pi_2^\ast &= \frac{\pi_2^\prime}{\sum \pi_2^\prime} = \frac{\pi_2^\prime}{1-\beta}
\end{align*}
We know that $\pi_1 = (\pi_1 \wedge \pi_2) + \pi_1^\prime$ and a similar statement holds for $\pi_2$, however we can now rewrite $\pi_1$ and $\pi_2$ as a sum of two densities:
\begin{align*}
\pi_1 &= \beta \pi_{1 \wedge 2} + (1-\beta) \pi_1^\ast \\
\pi_2 &= \beta \pi_{1 \wedge 2} + (1-\beta) \pi_2^\ast
\end{align*}
Note that 
\begin{align*}
d_{V}(\mathbf{\rho }_{1},\mathbf{\rho }_{2}) &= d_V(\pi_1 P,\pi_2 P) \\
&= d_V \Big( \left[\beta \pi_{1 \wedge 2} +(1-\beta)\pi_1^\ast\right] P,\left[\beta \pi_{1 \wedge 2} +(1-\beta)\pi_2^\ast\right] P \Big)\\
&\leq (1-\beta )d_{V}(\mathbf{\pi }_{1}^{\ast }P,\mathbf{\pi }_{2}^{\ast}P).
\end{align*}
%\begin{center}
%\includegraphics[scale=0.5]{handdrawn_pic}
%\end{center}
Now consider $\mathbf{\pi }_{1}^{\ast }P, \mathbf{\pi }_{2}^{\ast}P$ as different mixture mass functions on $Y$ given that $X$ takes each of its $n$ values, i.e. the different rows $\mathbf{p}_{i}$ of $P$. Let $\pi_1^\ast=(\pi_{11}^\ast,\pi_{12}^\ast,\ldots,\pi_{1n}^\ast)$, $\pi_2^\ast=(\pi_{21}^\ast,\pi_{22}^\ast,\ldots,\pi_{2n}^\ast)$ and the rows of P be $\mathbf{p}_i=(p_{i1},p_{i2},\ldots,p_{in})$. 
Note that by simple matrix multiplication on the entries we can now write 
\begin{align*}
\pi_1^\ast P &= \left( \left( \pi_{11}^\ast p_{11} + \pi_{12}^\ast p_{21}+\ldots+\pi_{1n}p_{n 1}\right) ,\left( \pi_{11}^\ast p_{12} + \pi_{12}^\ast p_{22}+\ldots+\pi_{1n}p_{n 2}\right),\ldots,\right. \\
&\left. \hspace{2cm}\left(\pi_{11}^\ast p_{1n} + \pi_{12}^\ast p_{2n}+\ldots+\pi_{1n}p_{n n}\right) \right) \\
&= \pi_{11}^\ast \left( p_{11},p_{12},\ldots,p_{1n}\right) + \pi_{12}^\ast \left(p_{21},p_{22},\ldots,p_{2 n}\right) + \ldots + \pi_{in}\left(p_{n 1},p_{n 2},\ldots,p_{nn}\right) \\
&= \pi_{11}^\ast \mathbf{p}_{1} + \pi_{12}^\ast \mathbf{p}_{2} + \ldots + \pi_{1n}^\ast \mathbf{p}_{n } \\
&= \sum_{i=1}^{n} \pi_{1i}^\ast \mathbf{p}_{i}
\end{align*}
Similarly,
\begin{equation*}
\pi_2^\ast P = \sum_{i=1}^{n} \pi_{2i}^\ast \mathbf{p}_i
\end{equation*}
which means we can now employ Lemma \ref{mixing_lemma}
\begin{align*}
d_{V}(\mathbf{\rho }_{1},\mathbf{\rho }_{2}) & \leq (1-\beta) d_V(\pi_1^\ast P,\pi_2^\ast P) \\
&\leq (1-\beta
)\sum_{i=1}^{n}\sum_{i^{\prime }=1}^{n}\pi _{1i}^{\ast }\pi _{2i^{\prime }}^{\ast }d_{V}(\mathbf{p}_{i^{\prime }},\mathbf{p}_{i}) \\
&=(1-\beta )\mathbf{\pi }_{1}D_{0}(P)\mathbf{\pi }_{2}^{T}
\end{align*}
So by definition of the diameter $d^+(P)$
\begin{align*}
d_{V}(\mathbf{\rho }_{1},\mathbf{\rho }_{2}) &\leq (1-\beta
)\sum_{i=1}^{n}\sum_{i^{\prime }=1}^{n}\pi _{1i}^{\ast }\pi _{2i^{\prime}}^{\ast }d^+(P) \\
&& \\
&=(1-\beta )d^+(P) \\
&=d^+(P)d_{V}(\mathbf{\pi }_{1},\mathbf{\pi }_{2})
\end{align*}
\end{proof}

\subsection{Proof of Lemma \ref{lemma14}}
\label{appendixB1}
Lemma \ref{lemma14} stated that for any two joint probability mass functions $p_{\boldsymbol{X,Y}}(\boldsymbol{x},\boldsymbol{y}), p_{\boldsymbol{X,Y}
}^{\prime }(\boldsymbol{x},\boldsymbol{y})$ over $\boldsymbol{X,Y}$ 
\begin{equation*}
d_{V}(p_{\boldsymbol{X,Y}}(\boldsymbol{x},\boldsymbol{y}),p_{\boldsymbol{X,Y}}^{\prime }(\boldsymbol{x},\boldsymbol{y}))\leq \inf \left\{ d_{V}(p_{\boldsymbol{X}}(\boldsymbol{x}),p_{\boldsymbol{X}}^{\prime }(\boldsymbol{x}))+\sup_{\boldsymbol{x}}d_{V}(p_{\boldsymbol{Y}|\boldsymbol{X}}(\boldsymbol{y
}|\boldsymbol{x}),p_{\boldsymbol{Y}|\boldsymbol{X}}^{\prime }(\boldsymbol{y}|\boldsymbol{x})),1\right\}.
\end{equation*}
\begin{proof}
Note
\begin{align*}
2d_{V}(p_{\boldsymbol{X,Y}}(\boldsymbol{x},\boldsymbol{y}),p_{\boldsymbol{X,Y}}^{\prime }(\boldsymbol{x},\boldsymbol{y})) &\triangleq \sum_{x,y}\left\vert p_{\boldsymbol{X,Y}}(\boldsymbol{x},\boldsymbol{y})-p_{\boldsymbol{X,Y}}^{\prime }(\boldsymbol{x},\boldsymbol{y})\right\vert  \\
&\triangleq \sum_{x,y}\left\vert p_{\boldsymbol{Y}|\boldsymbol{X}}(
\boldsymbol{y}|\boldsymbol{x})p_{\boldsymbol{X}}(\boldsymbol{x})-p_{
\boldsymbol{Y}|\boldsymbol{X}}^{\prime }(\boldsymbol{y}|\boldsymbol{x})p_{
\boldsymbol{X}}^{\prime }(\boldsymbol{x})\right\vert. 
\end{align*}
Let $r(\boldsymbol{y}|\boldsymbol{x})\triangleq p_{\boldsymbol{Y}|
\boldsymbol{X}}^{\prime }(\boldsymbol{y}|\boldsymbol{x})-p_{\boldsymbol{Y}|
\boldsymbol{X}}(\boldsymbol{y}|\boldsymbol{x})$. Then 
\begin{align*}
2d_{V}(p_{\boldsymbol{X,Y}}(\boldsymbol{x},\boldsymbol{y}),p_{\boldsymbol{X,Y}}^{\prime }(\boldsymbol{x},\boldsymbol{y})) &=\sum_{x,y}\left\vert p_{\boldsymbol{Y}|\boldsymbol{X}}(\boldsymbol{y}|
\boldsymbol{x})\left( p_{\boldsymbol{X}}(\boldsymbol{x})-p_{\boldsymbol{X}
}^{\prime }(\boldsymbol{x})\right) -r(\boldsymbol{y}|\boldsymbol{x})p_{
\boldsymbol{X}}^{\prime }(\boldsymbol{x})\right\vert  \\
&\leq \sum_{x}\left\{ \left\vert p_{\boldsymbol{X}}(\boldsymbol{x})-p_{
\boldsymbol{X}}^{\prime }(\boldsymbol{x})\right\vert \left( \sum_{y}p_{
\boldsymbol{Y}|\boldsymbol{X}}(\boldsymbol{y}|\boldsymbol{x})\right)
\right\} +\sum_{x,y}p_{\boldsymbol{X}}^{\prime }(\boldsymbol{x})\left\vert r(\boldsymbol{y}|\boldsymbol{x})\right\vert  \\
&=\sum_{x}\left\vert p_{\boldsymbol{X}}(\boldsymbol{x})-p_{\boldsymbol{X}
}^{\prime }(\boldsymbol{x})\right\vert +\sum_{x}\left\{ p_{\boldsymbol{X}}^{\prime }(\boldsymbol{x})\sum_{y}\left\vert r(\boldsymbol{y}|\boldsymbol{x})\right\vert \right\}  \\
&\leq \sum_{x}\left\vert p_{\boldsymbol{X}}(\boldsymbol{x})-p_{\boldsymbol{X}}^{\prime }(\boldsymbol{x})\right\vert +\sum_{y}\sup \left\vert r(
\boldsymbol{y}|\boldsymbol{x})\right\vert  \\
&\triangleq 2d_{V}(p_{\boldsymbol{X}}(\boldsymbol{x}),p_{\boldsymbol{X}
}^{\prime }(\boldsymbol{x}))+2\sup_{\boldsymbol{x}}d_{V}(p_{\boldsymbol{Y}|
\boldsymbol{X}}(\boldsymbol{y}|\boldsymbol{x}),p_{\boldsymbol{Y}|\boldsymbol{X}}^{\prime }(\boldsymbol{y}|\boldsymbol{x})).
\end{align*}
The result follows, since variation distance is, by definition, bounded by $1$. This simple result leads to the next useful result:
\end{proof}

\subsection{Proof of Lemma \ref{diameter_nonsingle_separator}}
\label{appendixB2}
Lemma \ref{diameter_nonsingle_separator} stated that 
$d^+(P_{\boldsymbol{Y|X}})\leq \inf \left\{ d^+(P_{\boldsymbol{Y}_{1}\boldsymbol{|X}})+d^+(P_{\boldsymbol{Y}_{2}\boldsymbol{|X,Y}_{1}}),1\right\}. $
\begin{proof}
In Lemma \ref{lemma14} substituting $p=p(y|x)$ and $p^\prime=p(y|x^\prime)$  gives us that for all values of $\boldsymbol{x,x}^{\prime }$
\begin{align*}
&d_{V}(p_{\boldsymbol{Y|X}}(\boldsymbol{y}_{1},\boldsymbol{y}_{2}|
\boldsymbol{x}),p_{\boldsymbol{Y|X}}(\boldsymbol{y}_{1},\boldsymbol{y}_{2}|\boldsymbol{x}^{\prime })) \\
&\leq d_{V}(p_{\boldsymbol{Y}_{1}|\boldsymbol{X}=x}(\boldsymbol{y}_{1}|
\boldsymbol{x}),p_{\boldsymbol{Y}_{1}|\boldsymbol{X}=x}(\boldsymbol{y}_{1}|\boldsymbol{x}^{\prime }))+\sup_{\left( \boldsymbol{y}_{1},\boldsymbol{x}\right) \boldsymbol{,}\left( \boldsymbol{y}_{1}^{\prime }\boldsymbol{,x}^{\prime }\right) }d_{V}(p_{\boldsymbol{Y}_{2}|\boldsymbol{Y}_{1},\boldsymbol{X}}(\boldsymbol{y}_{2}|\boldsymbol{y}_{1},\boldsymbol{x}),p_{\boldsymbol{Y}_{2}|\boldsymbol{Y}_{1},\boldsymbol{X}}(\boldsymbol{y}_{2}|\boldsymbol{y}_{1},\boldsymbol{x}^{\prime })).
\end{align*}
But by definition of the diameter
\begin{align*}
d^+(P_{\boldsymbol{Y}_{1}\boldsymbol{|X}}) &=\sup_{\boldsymbol{x},
\boldsymbol{x}^{\prime }}\left\{ d_{V}(p_{\boldsymbol{Y}_{1}|\boldsymbol{X}=x}(\boldsymbol{y}_{1}|\boldsymbol{x}),p_{\boldsymbol{Y}_{1}|\boldsymbol{X}=x}(\boldsymbol{y}_{1}|\boldsymbol{x}^{\prime })\right\},  \\
d^+(P_{\boldsymbol{Y}_{2}\boldsymbol{|X,Y}_{1}}) &=\sup_{\left( 
\boldsymbol{y}_{1},\boldsymbol{x}\right) ,\left( \boldsymbol{y}_{1}^{\prime},\boldsymbol{x}^{\prime }\right) }\left\{ d_{V}(p_{\boldsymbol{Y}_{2}|\boldsymbol{Y}_{1},\boldsymbol{X}}(\boldsymbol{y}_{2}|\boldsymbol{y}_{1},\boldsymbol{x}),p_{\boldsymbol{Y}_{2}|\boldsymbol{Y}_{1},\boldsymbol{X}}(\boldsymbol{y}_{2}|\boldsymbol{y}_{1},\boldsymbol{x}^{\prime })\right\}. 
\end{align*}
The result follows. 
\end{proof}
\end{appendices}

\end{document}